\tikzstyle{white vertex}=[fill=white, draw=black, shape=circle, thick, inner sep=1pt, minimum size=11pt]
\tikzstyle{green vertex}=[fill=green, draw=black, shape=circle, thick, inner sep=1pt, minimum size=11pt]
\tikzstyle{cyan vertex}=[fill=cyan, draw=black, shape=circle, thick, inner sep=1pt, minimum size=11pt]
\tikzstyle{magenta vertex}=[fill=magenta, draw=black, shape=circle, thick, inner sep=1pt, minimum size=11pt]
\tikzstyle{empty node}=[fill=white, shape=circle, inner sep=1pt]
\tikzstyle{filled path}=[-, fill=lipicsLightGray, thick, fill opacity=0.7]
\tikzstyle{thick edge}=[-, thick]
\title{A Simplicial Model for $\KBfour$: Epistemic Logic with Agents that May Die}
\author{\'Eric Goubault}{LIX, CNRS, \'Ecole Polytechnique, Institut Polytechnique de Paris, Paris, France}{eric.goubault@polytechnique.edu}{}{}
\author{J\'er\'emy Ledent}{MSP Group, University of Strathclyde, Glasgow, Scotland}{jeremy.ledent@strath.ac.uk}{https://orcid.org/ 0000-0001-7375-4725}{}
\author{Sergio Rajsbaum}{UNAM, Mexico City, Mexico}{rajsbaum@im.unam.mx}{https://orcid.org/0000-0002-0009-5287}{This work was partially done while this author was at \'Ecole Polytechnique, France}
\titlerunning{A General Epistemic Logic Approach to Distributed Tasks}
\authorrunning{\'E. Goubault, J. Ledent and S. Rajsbaum}
\keywords{Epistemic logic, Simplicial complexes, Distributed computing}
\begin{document}
\maketitle

\begin{abstract}
The standard semantics of multi-agent epistemic logic~$\Sfive$ is based on Kripke models whose accessibility relations are reflexive, symmetric and transitive.
This one dimensional structure contains implicit higher-dimensional information beyond pairwise interactions, that we formalized as pure simplicial models in a previous work in \emph{Information and Computation} 2021~\cite{gandalf-journal}. % \emph{Inf. Comput.} 278, 2021.
Here we extend the theory to encompass  simplicial models  that are not necessarily pure.
The corresponding class of Kripke models are those where the accessibility relation is symmetric and transitive, but might not be reflexive. Such models correspond to the  epistemic logic $\KBfour$.
Impure simplicial models arise in situations where two possible worlds may not have the same set of agents.
We illustrate it with distributed computing examples of synchronous systems where processes may crash.
\end{abstract}

\section{Introduction}

A very successful research programme of using  epistemic logic to reason about multi-agent systems began in the early 1980's showing the fundamental role of notions such as common knowledge~\cite{fagin,Moses2016}. The semantics used is the one of ``normal modal logics'', based on the classic \emph{possible worlds} relational structure  developed by Rudolf Carnap, Stig Kanger, Jakko Hintikka and Saul Kripke in the late 1950's and early 1960's. 

\subparagraph{From global states to local states.}
The intimate relationship between distributed computing and algebraic topology discovered in 1993~\cite{BorowskyG93,HS99,SaksZ00} showed the importance of moving from using worlds as the primary object, to \emph{perspectives} about possible worlds.
After all, what exists in many distributed systems is only the local states of the agents and events observable within the system.

Taking local states as the main notion led to the study of distributed systems based on geometric structures called \emph{simplicial complexes}.
In this context, a simplicial complex is constructed using the local states as vertices and the global states as simplexes.
While the solvability of some distributed tasks such as \emph{consensus} depends only on the one-dimensional (graph) connectivity of %the Kripke structure of
global states, 
the solvability of other tasks, most notably $k$-\emph{set agreement},
%where agents agree on at most $k$ different values,
depends on the higher-dimensional connectivity of the simplicial complex of local states.
%Notice that for $k=1$, $k$-set agreement and consensus are the same task.
See~\cite{herlihyetal:2013} for an overview of the topological theory of distributed computability.

\subparagraph{Pure simplicial model semantics~\cite{gandalf-journal}.}
From the very beginning~\cite{SaksZ00}, distributed computer scientists have used the word ``knowledge'' informally to explain their use of simplicial complexes.
However, a formal link with epistemic logic was established only recently~\cite{gandalf-journal}.

The idea is to replace the usual one-dimensional Kripke models by a new class of models based on simplicial complexes, called \emph{simplicial models}.
In~\cite{gandalf-journal}, we focused on modelling the standard multi-agent epistemic logic, $\Sfive$.
In this setting, a core assumption is that the same set of~$n$ agents always participate in every possible world.
Because of this, all the facets of the simplicial model are of the same dimension.
Such models are called \emph{pure} simplicial models.
With this restriction, we showed that the class of pure simplicial models is equivalent to the usual class of $\Sfive$ Kripke models.

Using pure simplicial models, we provided epistemic logic tools to reason about solvability of distributed tasks such as consensus and approximate agreement.
In subsequent work, we also studied the equality negation task, explored bisimilarity of pure simplicial models,  and connections with covering spaces~\cite{Ditmarsch2020KnowledgeAS,GoubaultLLR19disc,DitmarschGLLR21}.
In~\cite{gandalf-journal}, we left open the question of a logical obstruction to the solvability of $k$-set agreement, which was later given by Yagi and Nishimura~\cite{yagiNishimura2020TR} using the notion of distributed knowledge~\cite{HalpernM90}, in a sense a higher-dimensional version of knowledge.

\subparagraph{Systems with detectable crashes.}
In this paper, we wish to extend the work of~\cite{gandalf-journal} by lifting the restriction to ``pure'' simplicial complexes.
In distributed computing, pure complexes can be used to analyse the basic \emph{wait-free} shared-memory model of computation~\cite{waitFree91}. %, where all interleavings of the individual operations of the agents are possible.
However, impure\footnote{Throughout this paper, the adjective ``impure'' usually stands for ``not necessarily pure''.} complexes also show up in many situations:
perhaps the most simple one is the \emph{synchronous crash model}, where processes may fail by crashing\footnote{In the distributed computing literature, agents are called \emph{processes}, and when a process stops its execution prematurely, it is said to have \emph{crashed}. In this paper, we will say that agents may \emph{die}.}.
Due to the synchronous nature of the system, when a process crashes, the other processes will eventually know about it. This contrasts with asynchronous systems, where processes can be arbitrarily slow, and there is no way to distinguish a crashed process from a slow one.

Systems where crash-prone processes operate in synchronous rounds have been thoroughly studied since early on in distributed computing, see e.g.~\cite{FischerL82,LynchBook96}. 
At the start of each round, every process sends a message to all the other processes, in unspecified order. A process may crash at any time during the round, in which case only a subset of its messages will be received.
A global clock indicates the end of the round: any message that has not been received by then signifies that the sender has crashed.
Moreover, we usually assume a \emph{full-information} protocol: in each round, the messages sent by the processes consist of its local state at the end of the previous round.  

Figure~\ref{fig-synchEvol} below depicts the simplicial complexes of local states for three processes, after one and two rounds of the synchronous crash model.
In the initial situation (left), the local states are binary input values of the processes, $0$ or $1$.
Each of the $8$ triangles represents a possible global state, i.e.\ an assignment of inputs to processes.
The two other complexes (middle and right) represent the situation after one round and two rounds, respectively.
These complexes are impure: they contain both triangles (representing global states where all three processes are alive) and edges (representing global states where only two agents are alive).
Throughout the paper, we use this model as a running example, starting with~\cref{ex:simpmodel0}.
Further details from the distributed computing perspective can be found in~\cite{HERLIHY20001}. 

Synchronous systems have also been studied using epistemic logic, e.g.\ in the  seminal work of Dwork and Moses~\cite{DworkM90crash}, where a complete characterization of the number of rounds required to reach simultaneous consensus is given, in terms of common knowledge.
The focus however has been on studying solvability of consensus and other problems related to common knowledge, which as mentioned above, depend only on the $1$-dimensional connectivity of epistemic models.

\begin{figure}[h]
    \centering
    \includegraphics[scale=0.6]{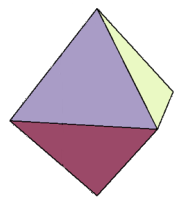}
    \qquad
    \includegraphics[scale=0.6]{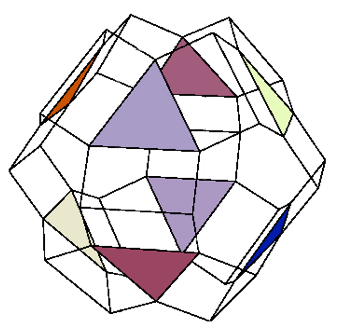}
    \qquad
    \includegraphics[scale=0.6]{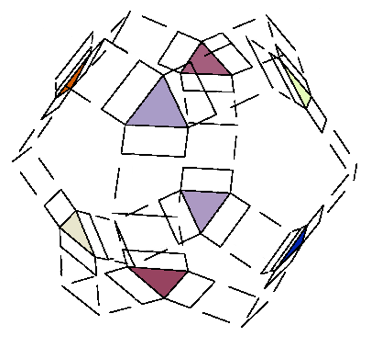}
    \caption{Input complex for three agents starting with binary inputs,  then the complex after one, and after two rounds. At most one agent may die~\cite{HERLIHY20001}.}
    \label{fig-synchEvol}
\end{figure}

\subparagraph{Contributions.}
With the long-term goal of going beyond consensus-like problems, to {$k$-set} agreement, renaming, and other tasks whose solvability depends on  higher dimensional topological connectivity, we introduce in this paper an epistemic logic where agents may die, whose semantics is naturally given by impure simplicial models.

%As shown in the first attempts of the work done independently, at the same time as ours, \cite{Ditmarsch21}, this is by no means a simple endeavour. It is indeed fairly difficult to get, in a natural manner, a normal modal logics at all, to be interpreted in general simplicial complexes that may be not pure.

Our approach is guided by the categorical equivalence between $\Sfive$ Kripke models and pure simplicial models, established in~\cite{gandalf-journal}.
It is easy and natural to generalize the class of simplicial models by simply removing the ``pure'' assumption.
However, the main technical challenge resides in finding an equivalent category of Kripke models.
This is achieved in~\cref{sec:frames}, where the categorical equivalence is established in~\cref{thm:equiv} for the frames, and \cref{thm:equiv2} for the models.
Guided by the equivalence with simplicial models, we introduce \emph{partial epistemic models}, whose underlying frame has the following characteristics:
\begin{itemize}
\item Indistinguishability relations must be transitive and symmetric, but may not be reflexive.
\item The frames must be \emph{proper}, in a sense defined in \cref{sec:KPER}.
\end{itemize}
Surprisingly, the morphisms between those frames are also unusual: a world is mapped to a sets of worlds, which must be \emph{saturated} (\cref{def:KPERmor}).

In \cref{sec:models}, we reap the benefits of this equivalence theorem. Modal logics on Kripke models are well understood, and we can then translate results back to simplicial models.
Each of the peculiar conditions that we impose on partial epistemic frames reveals an implicit assumption of simplicial models.

The consequence of losing reflexivity is that the logic is no longer $\Sfive$, but instead $\KBfour$, where the Axiom \textbf{T} does not hold.
This logic is not often considered by logicians; its close cousin $\mathbf{KD45_n}$ being more commonly studied, in order to reason about belief~\cite{Voorbraak92}.
But, as we argue in \cref{sec:aliveanddead,sec:axiomsystem}, $\KBfour$ is an interesting setting to reason about alive and dead agents.
Moreover, the requirement of having proper frames leads us to introduce two additional axioms: the axiom of Non-Emptiness $\mathbf{NE}$ says that at least one agent is alive in every world; and the Single-Agent axioms $\mathbf{SA_a}$ says that if exactly one agent~$a$ is alive, this agent knows everything that is true about the world.
In \cref{sec:completeness}, we claim that the logic $\KBfour$ augmented with these two extra axioms is sound and complete with respect to class of (possibly non-pure) simplicial models.
While soundness is easy to prove, the proof of completeness is more intricate and we leave it for the full version of this work.
Finally in \cref{sec:knowledge-gain}, we prove the so-called \emph{knowledge gain} property, which has been instrumental in applications to impossibility results in distributed computing, see e.g.~\cite{gandalf-journal}.

\subparagraph{Related work.}

A line of work started by Dwork and Moses~\cite{DworkM90crash} studied in great detail the synchronous crash failures model from an epistemic logic perspective.
However, in their approach, the crashed processes are treated the same as the active ones, with a distinguished local state ``\texttt{fail}''.
In that sense, all agents are present in every state, hence they still model the usual epistemic logic $\Sfive$.
Instead of changing the underlying Kripke models as we do here, they introduce new knowledge and common knowledge operators that take into account the non-rigid set of agents (see e.g.~\cite{FHMVbook}, Chapter 6.4).

Giving a formal epistemic semantics to impure simplicial models has also been attempted by van Ditmarsch~\cite{Ditmarsch21}, at the same time and independently from our work.
This approach end up quite different from ours. It describes a two-staged semantics with a \emph{definability relation} prescribing which formulas can be interpreted, on top of which the usual \emph{satisfaction relation} is defined.
This results in a quite peculiar logic: for instance, it does not obey Axiom~\textbf{K}, which is the common ground of all Kripke-style modal logics.
The question of finding a complete axiomatization is left open. 
In contrast, we take a more systematic approach: we first establish a tight categorical correspondence between simplicial models and Kripke models.
Via this correspondence, we translate the standard Kripke-style semantics to simplicial models.
This leads us to the  modal logic $\KBfour$.
We will discuss further the technical differences between our approach and that of~\cite{Ditmarsch21} in~\cref{sec:equiv}.

%\begin{itemize}
%\item discuss more technical issues about the differences with GANDALF paper?
%\item (?) We have atomic propositions on facets, so there is no need for "local" kripke models
%\item (?) The equivalence from the GandALF paper arises as a particular case of the one we prove here
%\EG[inline]{Yes, very much so}
%\item discuss more related work? 
%\end{itemize}

%Differences with Hans' paper (and with the Mexicans paper cited by Hans):
%\begin{itemize}
%\item In our semantics, ``dead agents know everything'', whereas in Hans' paper, dead agents know nothing (show example formulas and explain both interpretations).
%\item Our logic is KB4 whereas Hans' logic is [...]
%\item (?) We model explicitly the death of agents with atomic propositions of the form $\mathsf{alive}(a)$.
%\EG[inline]{At least we can model this, but as far as task solving is concerned, I am not so sure we need this in the logics, to be checked}
%\end{itemize}

\section{Background on simplicial complexes and Kripke structures}

\subparagraph{Chromatic simplicial complexes.}
Simplicial complexes are the basic structure of combinatorial topology~\cite{kozlov}.
In the field of fault-tolerant distributed computing~\cite{herlihyetal:2013}, their vertices are usually labelled by process names, often viewed as colours; hence the adjective ``chromatic''.

\begin{definition}
\label{def:simplicial-complex}
A \emph{simplicial complex} is a pair $\C = \langle V,S \rangle$ where $V$ is a set, and $S \subseteq \Pow{V}$ is a family of non-empty subsets of $V$ such that 
%\begin{itemize}
%\item 
for all $v \in V$, $\{v\} \in S$, and
%\item 
$S$ is downward-closed: for all $X \in S$, if $Y$ is non-empty and $Y\subseteq X$ then $Y \in S$. 
%\end{itemize}

Considering a finite, non-empty set $A$ of \emph{agents}, a \emph{chromatic simplicial complex} coloured by~$A$ is a triple $\langle V,S,\chi \rangle$ where~$\langle V,S \rangle$ is a simplicial complex, and~$\chi : V \to A$ 
assigns colours to vertices such that for every $X \in S$, all vertices of $X$ have distinct colours.
\end{definition}

\label{def:puresimp}
Elements of $V$ are called \emph{vertices}, and are identified with singletons of $S$.
Elements of $S$ are \emph{simplexes}, and the ones that are maximal w.r.t.\ inclusion are \emph{facets}.
The set of facets of~$\C$ is written~$\cF(\C)$.
The \emph{dimension} of a simplex $X \in S$ is $\dim(X) = |X|-1$. 
A simplicial complex $C$ is \emph{pure} if all facets are of the same dimension. 
The condition of having distinct colours for vertices of the same simplex is a fairly strong one: in particular, we will always be allowed to take the (unique) subface of a simplex $X$ of a chromatic simplicial complex with colours in some subset $U$ of $\chi(X)$.

\begin{definition}
\label{def:puresimpmor}
A \emph{chromatic simplicial map} $f : \C \to \D$ from $\C = \langle V,S,\chi \rangle$ to $\D = \langle V',S',\chi' \rangle$ is a function $f : V \to V'$ preserving simplexes, i.e.\ for every $X \in S$, $f(X) \in S'$, and preserving colours, i.e.\ for every $v\in V$, $\chi'(f(v)) = \chi(v)$.
\end{definition}

%Chromatic simplicial complexes form  a category. 
We denote by $\SimCpx{A}$ the category of chromatic simplicial complexes coloured by~$A$, and 
$\PureSimCpx{A}$ the full sub-category of pure chromatic simplicial complexes on $A$. 

\subparagraph{Equivalence with epistemic frames.}
The traditional possible worlds semantics of (multi-agent) modal logics relies on the notion of Kripke frame.
Let  $A$ be a finite set of agents.

\begin{definition}
\label{def:kripkeframe}
\label{def:kripkeframemor}
A \emph{Kripke frame} $M = \la W, R \ra$ is  a set of \emph{worlds}~$W$, together with an $A$-indexed family of relations on~$W$, $R : A \to \Pow{W \times W}$.
We write $R_a$ rather than $R(a)$, and $u\,R_a\,v$ instead of $(u,v) \in R_a$.
The relation $R_a$ is called the \emph{$a$-accessibility relation}.
%
%\begin{itemize}
%\item $S = \{s_0,s_1,\ldots\}$ is a set of \emph{nodes} or \emph{worlds}. 
%\item $\sim^A$ a function, yielding for every $a \in A$ an equivalence relation  $\sim^A\hspace*{-5pt}(a) \subseteq S \times S$
%called the \emph{$a$-accessibility relation}.
%We will often write $\sim_a$ rather than $\sim^A\hspace*{-3pt}(a)$ and
%will freely mix the notation $(s,t)\in\ \sim_a$ with the notation $(s
%\sim_a t)$.
%For the equivalence class of $s$ with respect to $\sim_a$, within $S_{\sim_a}$, we write $[s]_{\sim_a}$.
%\end{itemize}
%
Given two Kripke frames $M=\la W, R \ra$ and $N=\la W',R' \ra$, a \emph{morphism} from  $M$ to $N$ is a function $f : W \to W'$ such that for all $u, v \in W$, for all $a \in A$, 
$u\,{R_a}\,v$ implies $f(u)\,{R'_a}\,f(v)$.
\end{definition}

To model multi-agent epistemic logic $\Sfive$, we additionally require each relation~$R_a$ to be an equivalence relation.
When this is the case, we usually denote the relation by $\sim_a$, and call it the \emph{indistinguishability relation}.
For the equivalence class of~$w$ with respect to $\sim_a$, we write $[w]_{a} \subseteq W$.
Kripke frames satisfying this condition are called \emph{epistemic frames}.
An epistemic frame is \emph{proper} when two distinct worlds can always be distinguished by at least one agent: for all $w,w' \in W$, if $w \neq w'$ then $w \not \sim_a w'$ for some~$a \in A$.
In \cite{gandalf-journal}, we exploited an equivalence of categories between pure chromatic simplicial complexes and proper Kripke frames, to give an interpretation of $\Sfive$ on simplicial models. This allowed us to apply epistemic logics to study distributed tasks.

\begin{theorem}[\cite{gandalf-journal}]
\label{thm:gandalf}
The category of pure chromatic simplicial complexes $\PureSimCpx{A}$ is equivalent to the category of proper epistemic frames $\ProperEFrame{A}$.
\end{theorem}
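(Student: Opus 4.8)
The plan is to construct an explicit equivalence of categories by exhibiting a pair of functors going back and forth between $\PureSimCpx{A}$ and $\ProperEFrame{A}$, and showing they are (naturally isomorphic to) mutual inverses. The two constructions are the natural geometric and logical translations of the same data.

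\medskip

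\noindent\textbf{The two functors.}
First I would define a functor $\Phi : \PureSimCpx{A} \to \ProperEFrame{A}$ sending a pure chromatic simplicial complex $\C = \la V, S, \chi \ra$ to the frame whose worlds are the facets of $\C$, that is $W = \cF(\C)$. Since $\C$ is pure and coloured by $A$, every facet is an $(|A|{-}1)$-simplex carrying each colour exactly once, so a facet can be thought of as a global assignment of a local state (vertex) to each agent. I would then define the indistinguishability relation by declaring $X \sim_a Y$ whenever the two facets $X$ and $Y$ share the vertex of colour $a$, i.e.\ they agree on agent $a$'s local state. This relation is clearly reflexive, symmetric and transitive (transitivity holds because in a chromatic complex the colour-$a$ vertex of a facet is unique), so $\Phi(\C)$ is an epistemic frame; it is proper because two distinct facets that agreed on every colour would share all their vertices and hence be equal. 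On a chromatic simplicial map $f$, I would set $\Phi(f)$ to act on a facet $X$ by $f(X)$, which is again a facet by purity and colour-preservation; the morphism condition $X \sim_a Y \Rightarrow f(X) \sim_a f(Y)$ follows from $f$ preserving the colour-$a$ vertex. Conversely I would define $\Psi : \ProperEFrame{A} \to \PureSimCpx{A}$ sending a proper epistemic frame $M = \la W, \sim \ra$ to the complex whose vertices are equivalence classes: $V = \{ (a, [w]_a) \mid w \in W,\ a \in A \}$, coloured by $\chi(a,[w]_a) = a$, and whose facets are the sets $\{ (a,[w]_a) \mid a \in A \}$ for each world $w \in W$. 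Properness guarantees the map $w \mapsto$ (its facet) is injective, so worlds are in bijection with facets; downward closure defines the lower-dimensional simplexes. On a frame morphism I would define $\Psi$ on each vertex $(a,[w]_a)$ via the action on equivalence classes induced by the morphism condition.

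\medskip

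\noindent\textbf{Establishing the equivalence.}
With both functors in hand, I would verify functoriality (identities and composition are preserved, which is routine), and then construct natural isomorphisms $\Psi \circ \Phi \cong \mathrm{Id}$ and $\Phi \circ \Psi \cong \mathrm{Id}$. For $\Phi \circ \Psi$, starting from a frame $M$ one passes to its complex and back to the frame of facets; since properness makes worlds correspond bijectively to facets and the reconstructed $\sim_a$ recovers the original (two worlds share their colour-$a$ vertex $[w]_a$ precisely when $w \sim_a w'$), the composite is naturally isomorphic to the identity. For $\Psi \circ \Phi$, starting from a pure complex $\C$ one forms the frame of facets and rebuilds a complex whose vertices are $\sim_a$-classes of facets; here the key point is that a $\sim_a$-class of facets corresponds exactly to a single colour-$a$ vertex of $\C$ (all facets through a fixed colour-$a$ vertex form one class, using purity so that every vertex extends to a facet), giving a colour-preserving simplicial isomorphism natural in $\C$.

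\medskip

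\noindent\textbf{Main obstacle.}
I expect the principal difficulty to lie not in defining the functors but in the round-trip $\Psi \circ \Phi \cong \mathrm{Id}$, specifically in showing that the $\sim_a$-equivalence classes of facets are in natural bijection with the colour-$a$ vertices of the original complex. This step is exactly where purity is indispensable: one must know that every vertex of colour $a$ belongs to at least one facet (so no vertex is lost) and that no two distinct colour-$a$ vertices can index the same class. Without purity, a colour-$a$ vertex might sit only inside lower-dimensional simplexes, or facets of different dimensions could interfere, and the bijection would fail --- which is precisely the phenomenon that forces the generalization of \cref{sec:frames}. Checking naturality of both isomorphisms with respect to morphisms, while conceptually straightforward, will also require care to confirm that the saturated/colour-preserving conditions commute with the functor actions.
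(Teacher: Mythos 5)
Your construction is exactly the paper's approach: your $\Phi$ and $\Psi$ are the restrictions to the pure/proper subcategories of the functors $\kappa$ (facets as worlds, $X \sim_a Y$ iff $a \in \chi(X \cap Y)$) and $\sigma$ (vertices as pairs $(a,[w]_a)$, facets $X_w$) used for the generalized \cref{thm:equiv}, and both round-trip isomorphisms are verified the same way. The proof is correct; the only quibble is in your closing remark, since purity is really what guarantees reflexivity of $\sim_a$ and that $f(X)$ is again a facet, whereas the vertex--class bijection in $\Psi\circ\Phi \cong \mathrm{Id}$ survives even without purity (as the impure generalization shows).
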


\begin{example}\label{ex:basicDuality}
The picture below shows an epistemic frame (left) and its associated chromatic simplicial complex (right).
%The three agents $a, b, c$, are represented as
%colours blue, magenta and green (respectively) on the vertices of the simplicial complex.
The three agents are named $a,b,c$. 
The three worlds $\{w_1, w_2, w_3\}$ of the epistemic frame correspond to the three facets (triangles) of the simplicial complex.
In the epistemic frame, the $c$-labelled edge between the  worlds $w_2$ and $w_3$ indicates that $w_2 \sim_c w_3$.
Correspondingly, the two facets $w_2$ and $w_3$ of the simplicial complex share a common vertex, labelled by agent~$c$.
Similarly, the worlds $w_1$ and $w_2$ are indistinguishable by both agents $a$ and $b$; so the corresponding facets share their $ab$-labelled edge.

\begin{center}
\begin{tikzpicture}[auto,dot/.style={draw,circle,fill=black,inner sep=1pt,minimum size=4pt},cloud/.style={draw=black,thick,circle,fill=white,inner sep=1pt,minimum size=11pt}]

\node (p) at (-1.5,0) {$w_1$};
\node (q) at (0,0) {$w_2$};
\node (r) at (1.5,0) {$w_3$};
\path (p) edge[bend left] node[above] {$a$} (q)
      (p) edge[bend right] node[below] {$b$} (q)
      (q) edge node[above] {$c$} (r);
 
\node at (3.25,0) {\Large $\cong$};

\draw[thick, draw=black, fill=lipicsLightGray, fill opacity=0.7]
  (5,0) -- (6,-0.577) -- (6,0.577) -- cycle;
\draw[thick, draw=black, fill=lipicsLightGray, fill opacity=0.7]
  (6,-0.577) -- (6,0.577) -- (7,0) -- cycle;
\draw[thick, draw=black, fill=lipicsLightGray, fill opacity=0.7]
  (7,0) -- (8,-0.577) -- (8,0.577) -- cycle;
\node (p') at (5.65,0) {$w_1$};
\node (q') at (6.35,0) {$w_2$};
\node (r') at (7.65,0) {$w_3$};
\node[cloud] (b1) at (5,0) {$c$};
\node[cloud] (g1) at (6,-0.577) {$a$};
\node[cloud] (w1) at (6,0.577) {$b$};
\node[cloud] (b2) at (7,0) {$c$};
\node[cloud] (g2) at (8,-0.577) {$a$};
\node[cloud] (w2) at (8,0.577) {$b$};
\end{tikzpicture}
\end{center}
\end{example}

\section{Partial epistemic frames and simplicial complexes}

\label{sec:frames}
In this section, we generalise~\cref{thm:gandalf} to deal with chromatic simplicial complexes that may not be pure.
For that purpose, we will need to enlarge the class of Kripke frames to be considered, which we call \emph{partial epistemic frames}.
First, we start with our running example of an impure simplicial complex, which has been studied in distributed computing.

\begin{example}[Synchronous crash-failure model, one round, three agents]
\label{ex:simpmodel0}
Consider a set of three processes/agents $A=\{ a,b,c \}$.
For simplicity, we consider a single initial state where the agent $a,b,c$ start with input value $1,2,3$, respectively\footnote{Typically, in distributed computing, many initial assignments of inputs are possible. Thus, we model a situation where the inputs of other processes are not known until a message from them is received.}.
Each agent sends a message to the two other agents (and to itself, for uniformity), containing its input value.
An agent may \emph{crash} during the computation, in which case it stops sending messages. We assume moreover that at most two agents may crash, as in e.g.~\cite{DworkM90crash}. 
At the end of the round, an agent is \emph{alive} if it successfully sent all its messages, and \emph{dead} if it crashed before finishing.
The \emph{view} (or local state) of an alive agent is the set of messages that it received during the round.
Note that an alive agent always sees its own value.
For instance, the four possible views of agent~$a$ after one round are $\{1\}, \{1,2\}, \{1,3\}$ and $\{1,2,3\}$.

This situation is modelled by the chromatic simplicial complex $\C$ on the left of~\cref{fig:broadcast}.
Formally, the vertices of $\C$ are pairs $(a, \view)$ where $a \in A$ and $\view \subseteq \{1,2,3\}$ is its view.
There are $12$ such vertices, $4$ for each agent.
The colouring $\chi(a, \view) = a$ of a vertex is indicated on the picture.
There are $13$ facets $w_0, \ldots, w_{12}$, corresponding to the possible global states at the end of the round.
The middle triangle $w_1=\{ (a,\view_a),(b,\view_b),(c,\view_c)\}$, with $\view_a=\view_b=\view_c=\{1,2,3\}$, represents the execution where no agent dies.
The three isolated vertices, $w_0, w_{11}, w_{12}$ are executions where two agents died. For instance, in $w_0 = \{ (a,\{1\}) \}$, both $b$ and $c$ crashed before sending their value to~$a$.
The $9$ edges represent situations where one agent died, and two survived.
For example, $w_2=\{ (a,\view_a),(c,\view_c)\}$, with $\view_a = \{1,2,3\}$ and $\view_c = \{1,3\}$, represents the execution where $b$ crashed after sending its value to $a$, but not to $c$.
In $w_{10}$, agent $b$ crashed before sending any messages.
\end{example}

\begin{figure}[h]
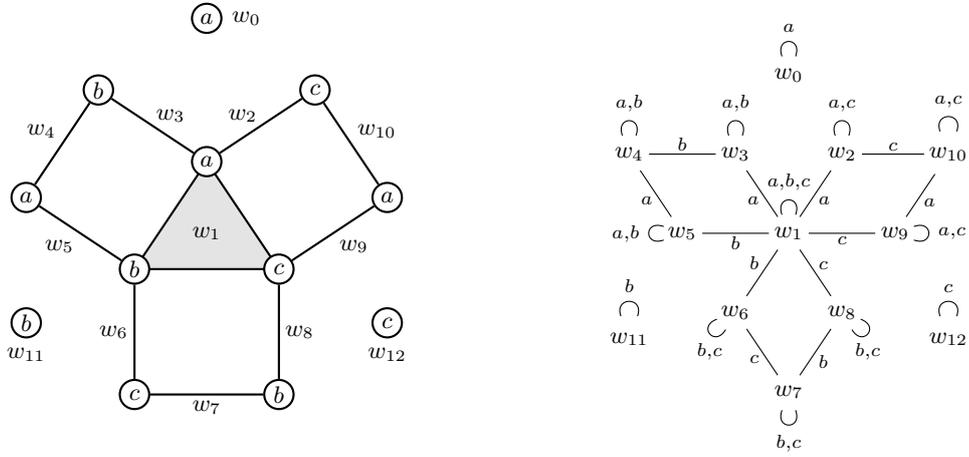

    \centering
    \tikzfig{synchronous-broadcast}
    \hspace{2cm}
    \tikzfig{synchronous-broadcast-kripke}
    \caption{A chromatic simplicial complex $\C$ (left), and a proper partial epistemic frame $M$ (right). The three agents are $A = \{a,b,c\}$ and the 13 facets/worlds are labelled $w_0,\ldots,w_{12}$.
    }
    \label{fig:broadcast}
    \label{fig:frame}
\end{figure}

\subsection{Partial epistemic frames}
\label{sec:KPER}

We consider now another type of Krikpe frame, in the spirit of PER semantic models of programming languages and ``Kripke logical partial equivalence relations'' of e.g.~\cite{KPER}.

\begin{definition}
A \emph{Partial Equivalence Relation} (PER) on a set $X$ is a relation ${R \subseteq X \times X}$ which is symmetric and transitive (but not necessarily reflexive).
\end{definition}

The \emph{domain} of a PER $R$ is the set $\dom{R} = \{ x \in X \mid R(x,x)\} \subseteq X$, and it is easy to see that $R$ is an equivalence relation on its domain, and empty outside of it.
Thus, PERs are equivalent to the ``local equivalence relations'' defined in~\cite{Ditmarsch21}.
Recall~$A$ is the set of agents.

\begin{definition}
A \emph{partial epistemic frame} $M = \la W,\sim \ra$ is a Kripke frame such that each relation $(\sim_a)_{a \in A}$ is a PER.
\end{definition}

%The idea of partial epistemic frames is that we want to model multi-agent systems in which not all agents are ``alive'' in all worlds.
We say that agent~$a$ is \emph{alive} in a world~$w$ when $w \in \dom{\sim_a}$, i.e., when $w \sim_a w$.
In that case, we write $[w]_a$ for the equivalence class of $w$ with respect to $\sim_a$, within $\dom{\sim_a}$.
We write $\live{w}$ for the set of agents that are alive in world $w$ and $\dead{w}$ for the set of agents that are dead in world $w$ (the complement of $\live{w}$). 
A partial epistemic frame is \emph{proper} if in all worlds, there is at least one agent which is alive, and moreover any two distinct worlds $w,w'$ can be distinguished by at least one agent that is alive in $w$, i.e., 
$
\forall w, w' \in W,\; \exists a \in A,\;\; w\sim_a w \mbox{ and } \left(w \neq w' \implies w \not \sim_a w'\right)
$. 
Note that, by symmetry of~$\neq$, there is also a (possibly different) agent $a'$ that is alive in~$w'$ and can distinguish $w$ and $w'$.

\begin{example} 
\label{ex:proper}
Two partial epistemic frames over the set of agents $A = \{a,b,c\}$ are represented below.
The frame on the left is proper, because agent~$b$ is alive in~$w_1$ and can distinguish between $w_1$ and $w_2$; and agent~$c$ is alive in~$w_2$ and can distinguish between $w_1$ and $w_2$.
The frame on the right is not proper, because there is no agent alive in $w'_2$ that can distinguish between $w'_1$ and $w'_2$.
\begin{center}
\begin{tikzpicture}[auto]
\tikzset{every loop/.style={}}
\node (p) at (-2,0) {$w_1$};
\node (q) at (0,0) {$w_2$};
\path (p) edge[loop above] node {$a,b$} (p)
      (p) edge node {$a$} (q)
      (q) edge[loop above] node {$a,c$} (q);
\end{tikzpicture}
\qquad \qquad
\qquad \qquad
\begin{tikzpicture}[auto]
\tikzset{every loop/.style={}}
\node (p) at (-2,0) {$w'_1$};
\node (q) at (0,0) {$w'_2$};
\path (p) edge[loop above] node {$a,b,c$} (p)
      (p) edge node {$a,b$} (q)
      (q) edge[loop above] node {$a,b$} (q);
\end{tikzpicture}
\end{center}
\end{example}

\begin{example}
\label{ex:frame}
The partial epistemic frame modelling the synchronous crash model of \cref{ex:simpmodel0} is pictured \cref{fig:frame} (right). It has~13 worlds $w_0,\ldots,w_{12}$.
In each world, the set of alive agents can be read off the reflexive ``loop'' edge.
\begin{itemize}
    \item In $w_1$, all three agents $\{a,b,c\}$ are alive. 
    \item In worlds $w_3$, $w_4$ and $w_5$, the two alive agents are $a$ and $b$. In worlds $w_2$, $w_{10}$ and $w_9$, the alive agents are $a$ and $c$. And in worlds $w_6$, $w_7$, $w_8$, agents $b$ and $c$ are alive. 
    \item In $w_0$, only $a$ is alive. In $w_{11}$, only $b$ is alive, and in $w_{12}$, only $c$ is alive. 
\end{itemize}
The accessibility relation is represented by edges labelled with the agents that do not distinguish between the worlds at its extremities. For instance, agent $a$ cannot distinguish between $w_3$ and $w_1$, and agent $b$ cannot distinguish between $w_3$ and $w_4$. It can easily be checked to be a proper partial epistemic frame. 
%
%\begin{figure}[h]
%    \centering
%    \includegraphics[scale=0.35]{frame}
%    \caption{A proper partial epistemic frame with 3 agents and 13 worlds.} %corresponding to simplicial complex of Example \ref{ex:simpmodel}.}
%    \label{fig:frame}
%\end{figure}
\end{example}

\subparagraph{Morphisms of partial epistemic frames.}
Our notion of morphism for partial epistemic frames differs from the one for a general Kripke frame (\cref{def:kripkeframemor}).
Here again, our definitions are guided by our goal (\cref{thm:equiv}), the equivalence between simplicial maps and morphisms of partial epistemic frames.
\cref{ex:morphisms} below should help motivate our definitions.
The novelty arises when we want a morphism $f$ that maps a world $w$, in which some agents $\live{w}$ are alive, to a world~$w'_1$ where strictly more agents are alive.
In this case, there might exist some other world~$w'_2$,
such that $w'_1 \sim_a w'_2$ for all $a \in \live{w}$.
We claim that such a world $w'_2$ should also be in the image of $w$ by the morphism $f$.
Thus, $f(w)$ is not a world but a set of worlds, which we require to be \emph{saturated}, in the following sense.

\begin{example}
\label{ex:morphisms}
The two pictures below show a chromatic simplicial map $g$ (left) and a morphism $f$ of partial epistemic frames (right).
The simplicial map $g$ is uniquely specified by the preservation of colours: it maps the edge $w_0$ onto the vertical $ab$-coloured edge of the complex on the right.
The morphism $f$ is defined by $f(w_0) = \{ w'_1, w'_2 \}$.
We will see in \cref{sec:equiv} how to relate these morphisms: one can be built from the other, and vice-versa.
\begin{center}
\begin{tikzpicture}[auto,scale=1.1,cloud/.style={draw=black,thick,circle,fill=white,inner sep=1pt,minimum size=11pt}]
\draw[thick, draw=black, fill=lipicsLightGray, fill opacity=0.7]
  (5,0) -- (6,-0.577) -- (6,0.577) -- cycle;
\draw[thick, draw=black, fill=lipicsLightGray, fill opacity=0.7]
  (6,-0.577) -- (6,0.577) -- (7,0) -- cycle;
\node (p') at (5.65,0) {$w'_1$};
\node (q') at (6.35,0) {$w'_2$};
\node[cloud] (b1) at (5,0) {$c$};
\node[cloud] (g1) at (6,-0.577) {$b$};
\node[cloud] (w1) at (6,0.577) {$a$};
\node[cloud] (b2) at (7,0) {$c$};
\node[cloud] (g0) at (3,-0.577) {$b$};
\node[cloud] (w0) at (3,0.577) {$a$};
\draw[thick] (g0) edge node {$w_0$} (w0);
\draw[->] (3.6,0) -- (4.5,0) node[midway] {$g$};
\end{tikzpicture}
\qquad \qquad
\qquad
\begin{tikzpicture}[auto]
\tikzset{every loop/.style={}}
\node (p0) at (-5,0) {$w_0$};
\path (p0) edge[loop above] node {$a,b$} (p0);
\node (p) at (-2,0) {$w'_1$};
\node (q) at (0,0) {$w'_2$};
\path (p) edge[loop above] node {$a,b,c$} (p)
      (p) edge node {$a,b$} (q)
      (q) edge[loop above] node {$a,b,c$} (q);
\draw[->] (-4,0.3) -- (-3,0.3) node[midway] {$f$};
\end{tikzpicture}
\end{center}
\end{example}

\begin{definition}
Given a partial epistemic frame $M = \la W, \sim \ra$, a subset of agents $U \subseteq A$, and a world $w \in W$, let
$
  \sat_{U}(w) = \{ w' \in W \mid w \sim_a w' \mbox{ for all } a \in U \}.
$
\end{definition}
The saturation requirement will be crucial in~\cref{sec:equiv} when we establish the equivalence of categories between partial epistemic frames and chromatic simplicial complexes.

\begin{definition}
\label{def:KPERmor}
Let $M=\langle W,\sim\rangle$ and $N=\langle W', \sim'\rangle$ be two partial epistemic frames.
A morphism of partial epistemic frame from $M$ to $N$ is a function $f: W \rightarrow \Pow{W'}$ such that
\begin{itemize}
\item \emph{(Preservation of $\sim$)} for all $a \in A$, for all $u,v \in W$,  $u \sim_a v$ implies $u' \sim'_a v'$, for all $u' \in f(u)$ and  $v' \in f(v)$,
\item \emph{(Saturation)}
for all $u \in W$, there exists $u' \in f(u)$ such that $f(u) = \sat_{\live{u}}(u')$. %, i.e.\
%$$
%f(u)=\{v' \in W' \ | \  u'\sim'_{a} v' \mbox{ for all } a \in \live{u}\}
%$$
\end{itemize}
Composition of morphisms is defined by $(g\circ f)(u) = \sat_{\live{u}}(w)$, for some $v \in f(u)$ and $w \in g(v)$.
\end{definition}

Let us check that the composite $g \circ f$ above is well-defined, i.e., that it does not depend on the choice of $v \in f(u)$ and $w \in g(v)$.
Assume we pick $v' \in f(u)$ and $w' \in g(v')$ instead.
Then $v \sim_a v'$ for all $a \in \live{u}$, because $f(u)$ is saturated. And by preservation of $\sim$, we get $w \sim_a w'$ for all $a \in \live{u}$, that is, $\sat_{\live{u}}(w) = \sat_{\live{u}}(w')$.

The first condition of a morphism $f$ of partial epistemic frame  above means that worlds that are indistinguishable by some agent~$a$ should have images composed of worlds that are indistinguishable by~$a$. 
The second condition states that the image of a world $u$ of $M$ is ``generated'' by a world $u'$ of $N$, as the set of all worlds of $N$ that cannot be distinguished from $u'$ by the agents alive in $u$.
In particular, notice that the saturation condition implies that $f(u)$ is always non-empty.

The next proposition says that, on proper frames, the only case when $f(u)$ can be multivalued is when $\live{u} \subsetneq \live{u'}$ for every $u'$ in $f(u)$.

%
%For lack of space, all proofs have been omitted; they can be found in the Appendix.

\begin{proposition}
\label{prop:singleton}
Let $M=\langle W,\sim\rangle$ and $N=\langle W', \sim'\rangle$ be two partial epistemic frames, and
 $f : M \to N$ be a morphism.
For all $u \in W$ and $u' \in f(u)$, $\live{u} \subseteq \live{u'}$.
Moreover, if $N$ is proper and $\live{u} = \live{u'}$, then $f(u)=\{u'\}$.
\end{proposition}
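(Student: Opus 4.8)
The plan is to establish the two claims separately, using the two morphism conditions of \cref{def:KPERmor} as the main tools.

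\textbf{First claim: $\live{u} \subseteq \live{u'}$.}
First I would take any agent $a \in \live{u}$, which by definition means $u \sim_a u$.
Applying the preservation-of-$\sim$ condition to the pair $u, u$ with the choices $u' \in f(u)$ for both slots, I immediately get $u' \sim'_a u'$, i.e.\ $a \in \live{u'}$.
Since $a$ was arbitrary, this gives $\live{u} \subseteq \live{u'}$ for every $u' \in f(u)$.
This direction is entirely routine and only uses the first morphism axiom.

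\textbf{Second claim: if $N$ is proper and $\live{u} = \live{u'}$, then $f(u) = \{u'\}$.}
By the saturation condition, there exists some $u'' \in f(u)$ with $f(u) = \sat_{\live{u}}(u'')$.
I would first argue that I may as well take $u'' = u'$: since $u' \in f(u) = \sat_{\live{u}}(u'')$, we have $u' \sim'_a u''$ for all $a \in \live{u}$, and because each $\sim'_a$ is a PER (hence symmetric and transitive on its domain), the saturated set is unchanged, so $f(u) = \sat_{\live{u}}(u')$ as well.
Now let $v \in f(u)$ be arbitrary; then $u' \sim'_a v$ for all $a \in \live{u}$.
The goal is to conclude $v = u'$, and this is exactly where properness of $N$ and the hypothesis $\live{u} = \live{u'}$ come in: since $\live{u} = \live{u'}$, the relations $u' \sim'_a v$ hold precisely for all $a$ alive in $u'$.
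If $v \neq u'$, properness of $N$ would supply some agent alive in $u'$ that distinguishes $u'$ from $v$, contradicting $u' \sim'_a v$ for all $a \in \live{u'}$.
Hence $v = u'$, and since $v$ was an arbitrary element of $f(u)$, we get $f(u) = \{u'\}$.

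\textbf{Main obstacle.}
I expect the only subtle point to be the bookkeeping around replacing the witness $u''$ from the saturation axiom by $u'$ itself, and making sure the PER properties justify $\sat_{\live{u}}(u'') = \sat_{\live{u}}(u')$; this must be done carefully because the relations are only partial equivalence relations, so symmetry and transitivity are available but reflexivity is not guaranteed outside the domain.
The key conceptual step, rather than a calculation, is recognizing that properness of $N$ is stated in terms of agents \emph{alive in} the worlds being distinguished, so the hypothesis $\live{u} = \live{u'}$ is exactly what is needed to match the set $\live{u}$ of agents controlling the saturation with the set of agents that properness allows us to use as distinguishers at $u'$.
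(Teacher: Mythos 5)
Your proof is correct and follows essentially the same route as the paper's: preservation of $\sim$ applied to $u \sim_a u$ gives the first claim, and the second follows because saturation forces any two elements of $f(u)$ to be $\sim'_a$-related for all $a \in \live{u} = \live{u'}$, which properness of $N$ forbids for distinct worlds. The extra bookkeeping you do to replace the saturation witness $u''$ by $u'$ is a valid (and slightly more explicit) version of the step the paper states directly.
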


\begin{proof}
The first fact is a direct consequence of the preservation of~$\sim$.
For the second one, let $u' \in f(u)$ such that $\live{u'}=\live{u}$.
Assume by contradiction that there is $u'' \in f(u)$ with $u'' \neq u'$.
By saturation, we have $u''\sim_a u'$ for all $a \in \live{u}=\live{u'}$.
This is impossible since $N$ is proper. 
\end{proof}

%A morphism of partial epistemic frames $f$ can equally be seen as a certain  equivalence class of maps $f': W \rightarrow W'$ mapping $u$ to any of the $u'$ given in the second condition of \cref{def:KPERmor}. 

The category of partial epistemic frames with set of agents $A$ is denoted by $\PEFrame{A}$,
and the full subcategory of proper partial epistemic frames is denoted by $\ProperPEFrame{A}$. 
Note that the category of proper epistemic frames $\ProperEFrame{A}$ %(see after \cref{def:kripkeframe})
is a full subcategory of~$\ProperPEFrame{A}$.
Indeed, in an epistemic frame all agents are alive in all worlds, so by \cref{prop:singleton} morphisms between proper epistemic frames are single-valued.
Then \cref{def:KPERmor} reduces to the standard notion of Kripke frame morphisms (\cref{def:kripkeframemor}).

\subsection{Equivalence between chromatic simplicial complexes and partial epistemic frames}
\label{sec:equiv}

In this section, we show how to canonically associate a proper partial epistemic frame with any chromatic simplicial complex, and vice-versa.
In fact, we have an equivalence of categories, meaning this correspondence can be extended to morphisms too (see \cref{ex:morphisms}).
We construct functors $\kappa : \SimCpx{A} \rightarrow \ProperPEFrame{A}$ and ${\sigma : \ProperPEFrame{A} \rightarrow \SimCpx{A}}$ and show that they form an equivalence of categories in \cref{thm:equiv}.
%
%\sr{Added reference in Relaterd Work to this subsection:}
A similar correspondence appears in~\cite{Ditmarsch21}, with two differences:
\begin{itemize}
\item They only show the equivalence between the objets of those categories, while we also deal with morphisms.
To achieve this, we had to define morphisms of partial epistemic frames (\cref{def:KPERmor}), since the standard notion does not work.
\item They only show that $\kappa \circ \sigma(M)$ is bisimilar to $M$, while we prove a stronger result, that there is an isomorphism.
To achieve this, we had to impose the condition of $M$ being proper, which is not considered in~\cite{Ditmarsch21}.
\end{itemize}

\begin{definition}[Functor $\kappa$]
\label{def:adjF}
Let $\C = \la V, S, \chi \ra$ be a chromatic simplicial complex on the set of agents~$A$. Its associated partial epistemic frame is $\kappa(\C)=\la W, \sim \ra$, where $W := \cF(\C)$ is the set of facets of $\C$, and the PER $\sim_a$ %, for each $a \in A$,
is given by $X \sim_a Y$ if $a \in \chi(X \cap Y)$ (for $X,Y \in \cF(\C)$).

The image of a morphism $f: \C \rightarrow \D$ in $\SimCpx{A}$, is the morphism $\kappa(f) : \kappa(\C) \to \kappa(\D)$ in $\ProperPEFrame{A}$ that takes a facet $X \in \cF(\C)$ to 
$\kappa(f)(X) \;=\; \{Z \in \cF(\D) \mid f(X) \subseteq Z \}$.
\end{definition}

\begin{example}
In \cref{fig:broadcast}, the simplicial complex $\C$ on the left %(from \cref{ex:simpmodel0})
is mapped by $\kappa$ to the partial epistemic frame $M = \kappa(\C)$ on the right. %(\cref{ex:frame}).
The epistemic frame $M$ contains a world per facet $w_0,\ldots,w_{12}$ of the simplicial complex. The reflexive ``loops'' in the $M$, indicating which agents are alive in a given world, are labelled with the colours of the corresponding facet.
For instance, $w_1 \sim_{\{a,b,c\}} w_1$ but $w_3 \sim_{\{a,b\}} w_3$ only; because $w_3$ in $\C$ is an edge whose extremities have colours $a$ and $b$.

The action of $\kappa$ on morphisms can be seen in \cref{ex:morphisms}, which depicts a simplicial map~$g$ and its associated morphism of partial epistemic frames, $f = \kappa(g)$.
\end{example}

\noindent
We now check that $\kappa$ is a well-defined functor from $\SimCpx{A}$ to $\ProperPEFrame{A}$.

\begin{proposition}
\label{lemma:proper}
%Let $\C = \la V,S,\chi \ra$ be a chromatic simplicial complex. Then
$\kappa(\C)$ is a proper partial epistemic frame.
\end{proposition}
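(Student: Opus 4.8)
The plan is to verify directly that $\kappa(\C) = \la W, \sim \ra$ satisfies all the defining properties: that each $\sim_a$ is a PER (symmetric and transitive), and that the frame is proper in the sense of \cref{sec:KPER}. First I would unwind the definition: worlds are facets $W = \cF(\C)$, and $X \sim_a Y$ iff $a \in \chi(X \cap Y)$. Symmetry is immediate since $X \cap Y = Y \cap X$. For transitivity, suppose $X \sim_a Y$ and $Y \sim_a Z$; then there are vertices $v \in X \cap Y$ and $v' \in Y \cap Z$ both coloured $a$. Since $Y$ is a simplex of a chromatic complex, all its vertices have distinct colours, so the $a$-coloured vertex of $Y$ is unique; hence $v = v'$, and this vertex lies in $X \cap Y \cap Z \subseteq X \cap Z$, giving $X \sim_a Z$. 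This uniqueness of the $a$-coloured vertex within a simplex is the crux and is exactly the ``strong'' chromatic condition highlighted after \cref{def:puresimpmor}.

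Next I would identify the domain of each PER and the notion of ``alive'': agent $a$ is alive in facet $X$ precisely when $a \in \chi(X)$, since $X \sim_a X$ iff $a \in \chi(X \cap X) = \chi(X)$. From this, $\live{X} = \chi(X)$, the set of colours appearing on $X$. Properness then has two parts. For the first, every facet $X$ is non-empty (simplexes are non-empty subsets of $V$), so $\chi(X) \neq \emptyset$, meaning at least one agent is alive in every world. For the second, given distinct facets $X \neq Y$, I must produce an agent $a \in \chi(X)$ with $X \not\sim_a Y$, i.e.\ $a \notin \chi(X \cap Y)$. The idea is that if every colour of $X$ were also realized by a common vertex in $X \cap Y$, then $X \subseteq X \cap Y \subseteq Y$; combined with the symmetric argument this would force $X = Y$, contradicting $X \neq Y$.

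The step I expect to be the main obstacle is this second properness condition, because it requires carefully translating ``sharing a vertex of colour $a$'' into a statement about set inclusion of facets. I would argue as follows: suppose toward a contradiction that $X \sim_a Y$ for all $a \in \chi(X)$. For each such $a$, let $v_a$ be the unique $a$-coloured vertex of $X$; the relation $X \sim_a Y$ says some $a$-coloured vertex lies in $X \cap Y$, and by uniqueness within $X$ that vertex must be $v_a$, so $v_a \in Y$. Ranging over all $a \in \chi(X)$ shows every vertex of $X$ lies in $Y$, hence $X \subseteq Y$. Since $X$ is a facet (maximal simplex) and $Y$ is a simplex, $X \subseteq Y$ forces $X = Y$, contradicting $X \neq Y$. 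Therefore some $a \in \chi(X) = \live{X}$ has $X \not\sim_a Y$, establishing properness. The reliance on \emph{facet} maximality and on the uniqueness of the $a$-coloured vertex are the two places where the chromatic and maximality hypotheses do real work, so those are the points I would state most carefully.
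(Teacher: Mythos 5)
Your proof is correct and follows essentially the same route as the paper's: symmetry and transitivity from the uniqueness of the $a$-coloured vertex in a simplex, and properness from the fact that $X \subseteq Y$ would contradict $X$ being a facet (you phrase this contrapositively, the paper directly, but the content is identical). Your explicit check that every facet is non-empty, covering the ``at least one agent alive'' half of properness, is a small point the paper leaves implicit.
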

\begin{proof}
The relation $\sim_a$ on facets is easily seen to be symmetric and transitive, because there can be at most one vertex $v \in X \cap Y$ with $\chi(v) = a$.
To show that $\kappa(\C)$ is proper, consider two worlds $X$ and $Y$ in $\kappa(\C)$, i.e., two facets of $\C$.
In simplicial complexes, $X \neq Y$ implies that at least one vertex of $X$, say $v$, does not belong to $Y$: otherwise, we would have $X \subseteq Y$ so $X$ would not be a facet.
Let $a = \chi(v)$ be the colour of~$v$.
Then $a$ is alive in~$X$ because $a \in \chi(X \cap X)$;
and $X \not \sim_a Y$ because $v \not \in X \cap Y$ and there can be only one vertex with colour $a$ in $X$.
\end{proof}

\begin{proposition}
\label{prop:mor}
%Let $f: \C \rightarrow \D$ be a morphism of chromatic simplicial complexes. Then
$\kappa(f)$ is a morphism of partial epistemic frames from $\kappa(\C)$ to $\kappa(\D)$.
\end{proposition}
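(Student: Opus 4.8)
The plan is to check directly the two conditions of \cref{def:KPERmor} for the map $\kappa(f)$, after recording a few facts about $\kappa$ and about chromatic simplicial maps. First I would observe that for a facet $X \in \cF(\C)$ the agents alive in $\kappa(\C)$ are exactly the colours of $X$, i.e.\ $\live{X} = \chi(X)$, since $X \sim_a X$ means $a \in \chi(X \cap X) = \chi(X)$. Next, I would note that $\kappa(f)(X)$ is non-empty: as $f$ is simplicial, $f(X) \in S'$ is a simplex of $\D$, hence is contained in at least one facet, which then lies in $\kappa(f)(X)$. Finally, since $f$ preserves colours and vertices of the simplex $X$ have pairwise distinct colours, the vertices $f(v)$ for $v \in X$ also have pairwise distinct colours; thus $\chi'(f(X)) = \chi(X)$, and every vertex of $f(X)$ is of the form $f(v)$ for a unique $v \in X$ with $\chi'(f(v)) = \chi(v)$.

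For the preservation of $\sim$, suppose $X \sim_a Y$, so there is a vertex $v \in X \cap Y$ with $\chi(v) = a$. Given any $X' \in \kappa(f)(X)$ and $Y' \in \kappa(f)(Y)$, by definition $f(X) \subseteq X'$ and $f(Y) \subseteq Y'$, hence $f(v) \in X' \cap Y'$ with colour $\chi'(f(v)) = a$, which gives $X' \sim'_a Y'$. This settles the first condition.

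For saturation, I would fix $X \in \cF(\C)$, choose any $X' \in \kappa(f)(X)$ (possible by non-emptiness), and show $\kappa(f)(X) = \sat_{\live{X}}(X')$, using $\live{X} = \chi(X)$. The inclusion $(\subseteq)$ is routine: if $Z \in \kappa(f)(X)$ then $f(X) \subseteq Z$, and for each $a \in \chi(X)$ a colour-$a$ vertex $f(v) \in X' \cap Z$ witnesses $X' \sim'_a Z$. The reverse inclusion $(\supseteq)$ is the crux: given $Z$ with $X' \sim'_a Z$ for all $a \in \chi(X)$, I must show $f(X) \subseteq Z$. Take $w = f(v) \in f(X)$ and set $a = \chi'(w) \in \chi(X)$; since $X' \sim'_a Z$ there is a colour-$a$ vertex in $X' \cap Z$, and as $w \in f(X) \subseteq X'$ also has colour $a$, the uniqueness of colours in the facet $X'$ forces this vertex to be $w$, so $w \in Z$. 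Hence $f(X) \subseteq Z$ and $Z \in \kappa(f)(X)$.

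The main obstacle is precisely this $(\supseteq)$ direction: it is where the chromatic hypothesis does the real work, letting us turn indistinguishability data in $\D$ (the relations $X' \sim'_a Z$) into a genuine containment $f(X) \subseteq Z$ of simplexes. The preservation condition and the $(\subseteq)$ inclusion, by contrast, follow immediately from colour preservation and the definition of $\kappa$.
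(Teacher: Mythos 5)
Your proof is correct and takes essentially the same route as the paper: the preservation-of-$\sim$ argument is identical (a common $a$-coloured vertex $v \in X \cap Y$ pushes forward to $f(v)$ lying in every facet containing $f(X)$ and every facet containing $f(Y)$). For saturation, the paper simply asserts that $\kappa(f)(X)$ is ``obviously saturated''; you supply the detail it omits, and your $(\supseteq)$ direction --- using uniqueness of colours within a facet to upgrade $X' \sim'_a Z$ for all $a \in \chi(X)$ to the containment $f(X) \subseteq Z$ --- is a correct filling-in of exactly the step the paper leaves implicit.
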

\begin{proof}
Assume $X$ and $Y$ are facets of $\C = \la V,S,\chi \ra$ such 
that $X \sim_a Y$ in $\kappa(\C)$. %that is, $a \in \chi(X \cap Y)$.
So there is a vertex $v \in V$ such that $v \in X \cap Y$ and $\chi(v) = a$.
Therefore $f(v)$ is in all facets $Z \in \kappa(\D)$ such that $f(X)\subseteq Z$ and all facets $T \in \kappa(\D)$ such that $f(Y) \subseteq T$. As $\chi(f(v))=a$, this means that $a \in \chi(Z \cap T)$, hence, for all $Z\in \kappa(f)(X)$ and $T \in \kappa(f)(Y)$, $Z \sim_a T$.
Furthermore, $\kappa(f)(X)$ as defined is obviously saturated, so $\kappa(f)$ is a morphism of partial epistemic frames.
\end{proof}

\begin{proposition}
\label{prop:kappafunctorial}
$\kappa$ is functorial, i.e.\ $\kappa(g \circ f) = \kappa(g) \circ \kappa(f)$.
\end{proposition}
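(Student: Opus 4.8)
The plan is to verify the identity pointwise on facets: for a fixed facet $X \in \cF(\C)$, I would show that $\kappa(g\circ f)(X)$ and $(\kappa(g)\circ\kappa(f))(X)$ are the same subset of $\cF(\mathcal{E})$, where $f : \C \to \D$ and $g : \D \to \mathcal{E}$ are composable chromatic simplicial maps. Unfolding \cref{def:adjF}, the left-hand side is $\kappa(g\circ f)(X) = \{\, U \in \cF(\mathcal{E}) \mid g(f(X)) \subseteq U \,\}$. For the right-hand side I would use the composition rule of \cref{def:KPERmor}: pick any facet $Z \in \kappa(f)(X)$, i.e.\ with $f(X) \subseteq Z$, and any facet $U_0 \in \kappa(g)(Z)$, i.e.\ with $g(Z) \subseteq U_0$ (both sets are non-empty since $f(X)$ and $g(Z)$ are simplexes, hence contained in facets); then $(\kappa(g)\circ\kappa(f))(X) = \sat_{\live{X}}(U_0)$. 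Since the composite was already shown to be independent of these choices, it suffices to prove the equality for one such pair $(Z, U_0)$.

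Before comparing the two sets, I would record two bookkeeping facts. First, in $\kappa(\C)$ an agent $a$ is alive at the facet $X$ iff $a \in \chi(X\cap X)$, so $\live{X} = \chi(X)$. Second, because $f$ and $g$ preserve colours and $X$ carries pairwise distinct colours, the map $v \mapsto g(f(v))$ is injective on $X$ and $\chi(g(f(X))) = \chi(X)$. With these in hand, the goal reduces to the set equality $\sat_{\live{X}}(U_0) = \{\, U \in \cF(\mathcal{E}) \mid g(f(X)) \subseteq U \,\}$, where by definition $U \in \sat_{\live{X}}(U_0)$ means $U_0 \sim_a U$, i.e.\ $a \in \chi(U_0 \cap U)$, for every $a \in \chi(X)$.

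The forward inclusion is routine: if $g(f(X)) \subseteq U$, then since also $g(f(X)) \subseteq g(Z) \subseteq U_0$ we get $g(f(X)) \subseteq U_0 \cap U$, whence $\chi(X) = \chi(g(f(X))) \subseteq \chi(U_0 \cap U)$, which is exactly $U \in \sat_{\live{X}}(U_0)$. The main obstacle is the reverse inclusion, and this is where the chromatic hypothesis does the real work. Suppose $\chi(X) \subseteq \chi(U_0 \cap U)$; I would argue vertex by vertex. Fix $v \in X$ with colour $a = \chi(v)$; since $a \in \chi(U_0 \cap U)$ there is a vertex $w' \in U_0 \cap U$ with $\chi(w') = a$, while $g(f(v)) \in g(f(X)) \subseteq U_0$ is another $a$-coloured vertex of the facet $U_0$. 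Because vertices of a simplex in a chromatic complex have distinct colours, $w'$ and $g(f(v))$ must coincide, so $g(f(v)) = w' \in U$. Ranging over all $v \in X$ gives $g(f(X)) \subseteq U$, completing the equality and hence the functoriality of $\kappa$. I expect the only delicate point to be this uniqueness-of-colour step, together with keeping straight that $\live{X}$ is taken in the source frame $\kappa(\C)$.
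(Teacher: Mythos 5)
Your proof is correct and follows essentially the same route as the paper's: unfold both sides on a facet $X$, reduce to a two-way inclusion between $\{U \mid g(f(X))\subseteq U\}$ and $\sat_{\chi(X)}(U_0)$, and use the uniqueness of an $a$-coloured vertex in a facet to pin down $g(f(v))$ in the reverse inclusion. The only cosmetic difference is that you handle the forward inclusion all at once via $g(f(X))\subseteq U_0\cap U$ rather than vertex by vertex; the substance is identical.
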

\begin{proof}
Let $f : \C \to \D$ and $g : \D \to \E$ be two chromatic simplicial maps.
By definition, for a world/facet $X \in \kappa(\C)$, we have $\kappa(g \circ f)(X) = \{Z' \in \cF(\E) \mid (g \circ f)(X) \subseteq Z' \}$,
while $(\kappa(g) \circ \kappa(f))(X) = \sat_{\chi(X)}(Z)$ for some facets $Z \in \kappa(g)(Y)$ and $Y \in \kappa(f)(X)$.
We show that they are equal.

Consider $Z'$ such that $(g \circ f)(X) \subseteq Z'$; we need to show that $Z' \sim_a Z$ for all $a \in \chi(X)$. Indeed, let $v$ be the $a$-coloured vertex of $X$.
Then $(g \circ f)(v) \in Z'$ by assumption, and $(g \circ f)(v) \in Z$ because $f(v) \in Y$. 
So there is an $a$-coloured vertex $(g \circ f)(v) \in Z' \cap Z$.

Conversely, let $Z' \in \sat_{\chi(X)}(Z)$, i.e.\ $Z' \sim_a Z$ for all $a \in \chi(X)$.
Let $v$ be a vertex of~$X$, and let $a = \chi(v)$.
Since $f(v) \in Y$, we have $(g \circ f)(v) \in Z$.
Since $Z$ can have only one $a$-colored vertex and $a \in \chi(Z' \cap Z)$, we get $(g \circ f)(v) \in Z'$. Thus $(g \circ f)(X) \subseteq Z'$ as required.
\end{proof}

Conversely, we now consider a partial epistemic frame $M=\la W,\sim \ra$ on the set of agents~$A$, % $A = \{a_0, \ldots, a_n\}$,
and we define the associated chromatic simplicial complex $\sigma(M)$.
%Intuitively, each world $w \in W$ with set of alive agents $\live{w}$ will be represented by a facet $\{v^w_{0},\ldots,v^w_{k}\}$ of dimension~$k$.
Intuitively, each world $w \in W$ where $k+1$ agents are alive will be represented by a facet $X_w$ of dimension~$k$, whose vertices are coloured by $\live{w}$.
Such facets must then be ``glued'' together according to the indistinguishability relations.
Formally, this is done by the following quotient construction:

%what we want to do is take one $n$-simplex $\{v^s_{i_0},\ldots,v^s_{i_k}\}$, where $\live{s}=\{i_0,\ldots,i_k\}$ is the set of indices $i$ with $s\sim^S_{i} s$ (i.e.\ is the set of agents $i$ alive in $s$) %\todo{definir les couleurs vivantes} for each $s \in S$, 
%and glue them together according to the indistiguishability relation. Formally, this goes as follows: 

\begin{definition}[Functor $\sigma$ on objects]
\label{def:adjG}
Let $M = \la W, \sim \ra$ be a partial epistemic frame.
Its associated chromatic simplicial complex is $\sigma(M) = \la V, S, \chi \ra$, where:
\begin{itemize}
\item The set of vertices is $V = \{(a,[w]_a) \mid w \in W, a \in \live{w} \}$. We denote such a vertex $(a,[w]_a)$ by $v^w_{a}$ for succinctness; but note that $v^w_{a} = v^{w'}_{a}$ when $w \sim_a w'$.
\item The facets are of the form $X_w = \{ v^w_{a} \mid a \in \live{w}\}$ for each $w \in W$; and the set $S$ consists of all their sub-simplexes.
\item The colouring is given by $\chi(v^w_a) = a$.
\end{itemize}
%Let $V = \{ v^w_{a} \mid w \in W, a \in \live{w}\}$, and equip it with the equivalence relation~$R$ defined by $v^w_a \mathrel{R} v^{w'}_a$ if and only if $w \sim_{a} w'$.
%Then define $\sigma(M)$ whose vertices are the equivalence classes $\langle v^w_a\rangle \in V/R$, and whose simplexes are of the form $\{\langle v^w_{a}\rangle \mid a \in \live{w}\}$ for $w \in W$, as well as their sub-simplexes. The facets are then the $\{\langle v^w_{a_0}\rangle,\ldots,\langle v^w_{a_k}\rangle\}$ for $w \in W$,  where $\{a_0,\ldots,a_k\}=\live{w}$ is the set of agents that are alive in $w$. The colouring map is given by $\chi(\langle v^w_{a}\rangle)=a$.
\end{definition}

It is straightforward to see that this is a chromatic simplicial complex.
We now check that there is indeed one distinct facet of $\sigma(M)$ for each world of $M$.

\begin{lemma}
\label{lemma:facets}
If $M$ is proper, the facets of $\sigma(M)$ are in bijection with the worlds of~$M$. 
\end{lemma}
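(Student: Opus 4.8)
The plan is to show that the assignment $w \mapsto X_w$ from \cref{def:adjG} is the desired bijection between $W$ and the set of facets $\cF(\sigma(M))$. The whole argument rests on a single consequence of properness, which I would isolate first: \emph{if $w \sim_a w'$ for every $a \in \live{w}$, then $w = w'$}. This follows immediately from the definition of properness, for were $w \neq w'$, properness would furnish some agent $a \in \live{w}$ with $w \not\sim_a w'$, contradicting the hypothesis. Intuitively, a world is determined by the equivalence classes, taken under its own alive agents, that make up its facet.

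Next I would check that each $X_w$ is genuinely a facet, i.e.\ maximal in $S$. Since every simplex of $\sigma(M)$ is by construction a sub-simplex of some $X_{w'}$, it suffices to show that $X_w \subseteq X_{w'}$ forces $X_w = X_{w'}$. Because $\sigma(M)$ is chromatic, an inclusion $X_w \subseteq X_{w'}$ matches vertices of equal colour: it gives $\live{w} \subseteq \live{w'}$ and, for each $a \in \live{w}$, the equality $v^w_a = v^{w'}_a$, i.e.\ $w \sim_a w'$. The key fact then yields $w = w'$, hence $X_w = X_{w'}$, so $w \mapsto X_w$ does land in $\cF(\sigma(M))$.

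Injectivity follows from the same reasoning: if $X_w = X_{w'}$ then reading off the colours gives $\live{w} = \live{w'}$, and matching the $a$-coloured vertices gives $v^w_a = v^{w'}_a$, i.e.\ $w \sim_a w'$, for every $a \in \live{w}$; the key fact again gives $w = w'$. Surjectivity is the easy direction: any facet $X$ is a simplex, hence a sub-simplex of some $X_w \in S$, and by maximality of $X$ we conclude $X = X_w$.

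I expect the maximality step to be the main obstacle, and it is exactly where properness is indispensable. Without it, a world $w'$ with few alive agents can be absorbed into a world $w$ with more alive agents whenever they agree on $\live{w'}$, so that $X_{w'} \subsetneq X_w$ and $X_{w'}$ is not a facet at all; the non-proper frame on the right of \cref{ex:proper} exhibits precisely this failure, with $X_{w'_2} \subsetneq X_{w'_1}$. Checking that the chromatic condition legitimately identifies the $a$-coloured vertices across an inclusion or equality of simplexes is the only other point requiring care, but it is routine given that each simplex carries at most one vertex of each colour.
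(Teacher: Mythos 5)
Your proof is correct and takes essentially the same approach as the paper: both reduce the statement to showing $X_w \subseteq X_{w'}$ forces $w = w'$, using properness to produce an agent alive in $w$ distinguishing $w$ from $w'$ (you phrase this contrapositively as an isolated key fact, the paper argues directly that $v^w_a \notin X_{w'}$). Your explicit treatment of surjectivity and of the non-proper counterexample is a welcome addition but does not change the substance of the argument.
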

\begin{proof}
Each world $w \in W$ is associated with the simplex $X_w = \{ v^w_{a} \mid a \in \live{w}\}$.
We need to prove that these simplexes are indeed facets, and that they are distinct for $w \neq w'$.
It suffices to show that for all $w \neq w'$, $X_w \not \subseteq X_{w'}$. 
Since $M$ is proper, there exists an agent~$a$ which is alive in $w$ such that $w \not \sim_a w'$.
Then, either $a$ is alive in $w'$, in which case $v^w_a \neq v^{w'}_a$, or $a$ is dead in $w'$.
In both cases, $v^w_a$ is not a vertex of $X_{w'}$ so $X_w \not \subseteq X_{w'}$.
\end{proof}

\begin{example}
In \cref{fig:broadcast}, the partial epistemic frame $M$ on the right is mapped by $\sigma$ onto the simplicial complex $\C = \sigma(M)$ on the left. 
Each world $w_0,\ldots,w_{12}$ of $M$ is turned into a facet of the simplicial complex $\sigma(M)$, whose dimension is the number of alive agents minus one.
These facets are glued along the sub-simplexes whose colours are the agents that cannot distinguish between two worlds.
For instance, world $w_1$ is associated with the facet of the same name, with 3 colours, hence of dimension 2 (the central triangle). On the other hand, the world $w_3$ turns into an edge (dimension 1), glued to the triangle $w_1$ along the vertex with colour $a$, because $w_1 \sim_a w_3$.
\end{example}

%\noindent
%The following two lemmas will prove useful for defining the functor $\sigma$ on morphisms. 
%
%\begin{lemma}
%\label{lem:faces1}
%Suppose that in $\sigma(M)$, the facets $X_w$ and $X_{w'}$ share a common simplex, $Z \subseteq X_w \cap X_{w'}$. Then the worlds $w$ and $w'$ are indistinguishable by all agents in $\chi(Z)$.
%\end{lemma}
%\begin{proof}
%Recall that $X_w = \{ v^w_{a} \mid a \in \live{w}\}$ and $X_{w'} = \{ v^{w'}_{a} \mid a \in \live{w'}\}$.
%Suppose $a \in \chi(Z)$, i.e., there exists $v \in Z$ such that $\chi(v) = a$.
%Since $v \in Z = X_w \cap X_{w'}$ belongs to both $X_w$ and~$X_{w'}$, we have in fact $v = v^w_{a} = v^{w'}_{a}$. 
%Thus $w \sim_a w'$, which concludes the proof.
%\end{proof}
%
%%Conversely, we have: 
%
%\begin{lemma}
%Suppose worlds $w$ and $w'$ in $M$ are indistinguishable by a set of agents $U \subseteq A$.
%Then the facets $X_w$ and $X_{w'}$ of $\sigma(M)$ share (at least) a unique face coloured by $U$. 
%\label{lem:faces2}
%\end{lemma}
%\begin{proof}
%Let $a \in U$. Since $w \sim_a w'$, we have $a \in \live{w}$ and $a \in \live{w'}$ by symmetry and transitivity, and moreover $v^{w}_{a} = v^{w'}_a$.
%So $X_w$ and $X_{w'}$ share their $a$-colored vertex.
%Since this is true for all $a \in U$, the face of $X_w$ with colours $U$ is the same as the face of $X_{w'}$ with colours~$U$. 
%\end{proof}

We also define the action of $\sigma$ on morphisms of partial epistemic frames: 

\begin{definition}[Functor $\sigma$ on morphisms]
\label{def:Gmor}
Now let $f : M \to N$ be a morphism in $\ProperPEFrame{A}$.
We define the simplicial map $\sigma(f) : \sigma(M) \to \sigma(N)$ as follows. For each vertex of $\sigma(M)$ of the form $v^w_a$ with $w \in W$, we pick any $w' \in f(w)$ and define $\sigma(f)(v^w_a) = v^{w'}_a$.
\end{definition}

To check that this is well-defined, we need to show that the simplicial map $\sigma(f)$ does not depend on the choices of $w$ and $w'$.
Assume we pick a different world $u' \in f(w)$, $u' \neq w'$.
By the saturation property of~$f$ we have $u' \sim'_a w'$, so $v^{u'}_a = v^{w'}_a$. Hence $\sigma(f)(v^w_a)$ is a uniquely defined vertex of $\sigma(N)$.
Now, assume that the vertex $v^w_a$ of $\sigma(M)$ could also be described as~$v^u_a$ with $u \in W$.
Since $v^w_a = v^u_a$, we have $w \sim_a u$ in $M$.
By the preservation property of~$f$, for every $u' \in f(u)$ we have $u' \sim'_a w'$, so $v^{u'}_a = v^{w'}_a$.
Once again, the choice of $w \in W$ does not influence the definition of $\sigma(f)$. 

It is easy to check that $\sigma(f)$ is indeed a chromatic simplicial map: preservation of colours is obvious by construction; and for the preservation of simplexes, notice that each facet $X_w$ of $\sigma(M)$ is mapped into the facet $X_{w'}$ of $\sigma(N)$, for some $w' \in f(w)$.
However, note that $\sigma(f)(X_w)$ might not in general be a facet; we only know that $\sigma(f)(X_w) \subseteq X_{w'}$.

\begin{proposition}
\label{prop:sigmafunctorial}
$\sigma$ is functorial, i.e.\ $\sigma(g \circ f) = \sigma(g) \circ \sigma(f)$.
\end{proposition}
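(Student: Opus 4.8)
The plan is to verify the equation $\sigma(g \circ f) = \sigma(g) \circ \sigma(f)$ directly on vertices of $\sigma(M)$, since a chromatic simplicial map is entirely determined by its action on vertices. So I would fix a vertex $v^w_a$ of $\sigma(M)$, where $w \in W$ and $a \in \live{w}$, and unfold both sides of the equation according to \cref{def:Gmor} and \cref{def:KPERmor}. The left-hand side requires computing the composite morphism $g \circ f : M \to \E$ in $\ProperPEFrame{A}$, evaluating it at $w$, and then applying $\sigma$; the right-hand side applies $\sigma(f)$ to $v^w_a$ and then $\sigma(g)$ to the result.

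First I would compute the right-hand side. By \cref{def:Gmor}, $\sigma(f)(v^w_a) = v^{w'}_a$ for some chosen $w' \in f(w)$. Then $\sigma(g)$ applied to $v^{w'}_a$ gives $v^{w''}_a$ for some chosen $w'' \in g(w')$. Hence $(\sigma(g)\circ\sigma(f))(v^w_a) = v^{w''}_a$ with $w'' \in g(w')$ and $w' \in f(w)$. For the left-hand side, recall that the composite morphism is defined by $(g \circ f)(w) = \sat_{\live{w}}(w'')$ for some $v \in f(w)$ and $w'' \in g(v)$; applying $\sigma$ then yields $\sigma(g\circ f)(v^w_a) = v^{z}_a$ for any chosen $z \in (g \circ f)(w) = \sat_{\live{w}}(w'')$. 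The key observation is that $w''$ itself lies in $\sat_{\live{w}}(w'')$ (since $\sim_a$ is reflexive on its domain and $a \in \live{w} \subseteq \live{w''}$ by \cref{prop:singleton}), so we may take $z = w''$, giving $\sigma(g \circ f)(v^w_a) = v^{w''}_a$ as well.

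The two sides therefore produce the same vertex, $v^{w''}_a$, provided the choices of intermediate worlds are consistent. Since we already verified (in the discussion following \cref{def:Gmor}) that $\sigma(f)$ and $\sigma(g)$ are independent of the choices made, and that the composite $g \circ f$ is well-defined independent of its choices, I can pick the \emph{same} witnesses $w' \in f(w)$ and $w'' \in g(w')$ on both sides, and both computations land on $v^{w''}_a$. This establishes equality on every vertex, and hence $\sigma(g \circ f) = \sigma(g) \circ \sigma(f)$ as simplicial maps.

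The main obstacle is bookkeeping the existential choices so that the two sides genuinely refer to the same intermediate worlds, and in particular confirming that $w''$ is a legitimate representative of the class $\sat_{\live{w}}(w'')$ with respect to the colour $a$. This hinges on $a \in \live{w}$ together with \cref{prop:singleton}, which guarantees $\live{w} \subseteq \live{w'} \subseteq \live{w''}$, so that $a$ is alive in $w''$ and the vertex $v^{w''}_a$ is well-defined; and on the saturation property, which ensures any other representative $z \in \sat_{\live{w}}(w'')$ satisfies $z \sim_a w''$, hence $v^z_a = v^{w''}_a$. Once these alive-set inclusions and the saturation identification are in place, the equality of the two composites is immediate.
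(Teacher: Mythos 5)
Your proof is correct and follows essentially the same route as the paper: both verify the identity vertex-by-vertex on $v^w_a$, using $\live{w} \subseteq \live{w'} \subseteq \live{w''}$ and the saturation of the images. The only difference is how the existential choices are reconciled: the paper takes arbitrary (possibly different) witnesses on the two sides and shows the resulting worlds are $\sim_a$-related by an explicit chase (preservation of $\sim$ applied twice, then transitivity), whereas you invoke the already-established choice-independence of $\sigma(f)$, $\sigma(g)$ and of the composite $g\circ f$ to align the witnesses $w'\in f(w)$ and $w''\in g(w')$ on both sides, after which only the membership $w''\in\sat_{\live{w}}(w'')$ remains to be checked --- a legitimate shortcut, since those well-definedness verifications appear in the paper immediately after \cref{def:Gmor} and \cref{def:KPERmor}.
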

\begin{proof}
Let $f : M \to N$ and $g : N \to P$ be morphisms of partial epistemic frames.
Let $v^w_a$ be a vertex of $\sigma(M)$, where $w \in W$ is a world of~$M$.
By definition, $\sigma(g \circ f)(v^w_a) = v^{w''}_a$ where $w'' \in (g \circ f)(w)$; whereas $(\sigma(g) \circ \sigma(f))(v^w_a) = v^{y''}_a$ where $y'' \in g(y')$ and $y' \in f(w)$.
To show that they are the same vertex, we need to prove that $w'' \sim_a y''$.
By definition of $(g \circ f)(w)$, there exists $x' \in f(w)$ and $x'' \in g(x')$ such that $w'' \sim_a x''$.
Since $w \sim_a w$, we have $x' \sim_a y'$ by the preservation property of~$f$, and then $x'' \sim_a y''$ again by preservation.
Finally,  $w'' \sim_a y''$ by transitivity.
\end{proof}

Now we can state the main technical result of this paper: 

\begin{theorem}
\label{thm:equiv}
$\kappa$ and $\sigma$ define an equivalence of categories between $\ProperPEFrame{A}$ and $\SimCpx{A}$.
\end{theorem}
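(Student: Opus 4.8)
The plan is to prove the equivalence by exhibiting two natural isomorphisms, $\sigma \circ \kappa \cong \mathrm{Id}_{\SimCpx{A}}$ and $\kappa \circ \sigma \cong \mathrm{Id}_{\ProperPEFrame{A}}$. Since functoriality of both $\kappa$ and $\sigma$ has already been established (\cref{prop:kappafunctorial} and \cref{prop:sigmafunctorial}), it remains to construct the unit and counit componentwise, check that each component is an isomorphism, and then verify naturality.

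First I would treat the direction $\sigma \circ \kappa \cong \mathrm{Id}_{\SimCpx{A}}$. Given $\C = \la V, S, \chi \ra$, the frame $\kappa(\C)$ has the facets of $\C$ as worlds, with $X \sim_a Y$ iff $a \in \chi(X \cap Y)$, so that $\live{X} = \chi(X)$. Applying $\sigma$ yields a complex whose vertices are the pairs $(a,[X]_a)$ with $a \in \chi(X)$. The natural candidate isomorphism sends $(a,[X]_a)$ to the unique $a$-coloured vertex of the facet $X$ in $\C$. This is well-defined and injective because two facets lie in the same class $[X]_a$ exactly when they share their $a$-coloured vertex; it is colour-preserving by construction; and it is surjective provided every vertex lies in some facet (which I would note holds in this setting). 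It carries each facet $X_X$ of $\sigma(\kappa(\C))$ onto the facet $X$ of $\C$, hence is a simplicial isomorphism. For the other direction, $\kappa \circ \sigma \cong \mathrm{Id}_{\ProperPEFrame{A}}$, I start from a proper frame $M$. By \cref{lemma:facets} the map $w \mapsto X_w$ is a bijection between $W$ and the facets of $\sigma(M)$ — this is precisely where properness is needed. Since the worlds of $\kappa(\sigma(M))$ are these facets, I obtain a bijection on worlds, and I would check it preserves accessibility: $X_w \sim_a X_{w'}$ iff $a \in \chi(X_w \cap X_{w'})$ iff $v^w_a = v^{w'}_a$ iff $a \in \live{w} \cap \live{w'}$ and $[w]_a = [w']_a$, which is exactly $w \sim_a w'$. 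Thus the component is an isomorphism of frames.

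The hard part will be naturality, and it is sharper for the second isomorphism because morphisms of partial epistemic frames are set-valued and constrained by saturation (\cref{def:KPERmor}). The naturality square for a frame morphism $f : M \to N$ requires an \emph{equality of saturated sets}, not of single worlds, and its two sides are computed by different routes: one through the set-valued composition rule for $\kappa(\sigma(f))$, the other directly through $f$ followed by the (single-valued) isomorphism components. I would unfold the definitions of $\kappa(f)$ (\cref{def:adjF}) and $\sigma(f)$ (\cref{def:Gmor}) and verify that the saturation built into $\kappa(f)(X) = \{Z \mid f(X) \subseteq Z\}$ coincides with the $\sat_{\live{w}}(\cdot)$ produced by traversing $\sigma$ and re-applying $\kappa$; \cref{prop:singleton} guarantees the isomorphism components are themselves single-valued, which keeps the bookkeeping manageable.

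A cleaner alternative, which I would keep in reserve, is to bypass explicit naturality and instead prove that $\kappa$ is fully faithful and essentially surjective. Essential surjectivity is immediate from the object isomorphism $M \cong \kappa(\sigma(M))$ above. Faithfulness and fullness then reduce to showing that $g \mapsto \kappa(g)$ is a bijection from chromatic simplicial maps $\C \to \D$ onto morphisms $\kappa(\C) \to \kappa(\D)$; here the saturation condition is exactly what forces every frame morphism to arise from a unique underlying simplicial map (reconstructed vertexwise as in the definition of $\sigma$ on morphisms), so that the two hom-sets are in bijection. Either route concludes that $\kappa$ and $\sigma$ form the desired equivalence of categories.
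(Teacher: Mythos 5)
Your proposal is correct and follows essentially the same route as the paper: both directions of the equivalence are established by exhibiting the object-level isomorphisms (via \cref{lemma:facets} and the vertex bijection $u \leftrightarrow v^Z_a$) and then checking compatibility with morphisms, where the only delicate point is matching the saturated set $\kappa\sigma(f)(X_w)$ with $\{X_{w'} \mid w' \in f(w)\}$ — exactly the verification you flag. The fully-faithful-plus-essentially-surjective alternative you keep in reserve is not the route the paper takes, but the primary argument coincides with the paper's proof.
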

\begin{proof}
We have already seen that $\kappa$ and $\sigma$ are well-defined functors, it remains to show that:
\begin{enumerate}[(i)]
\item \label{item1} The composite $\kappa \circ \sigma$ is naturally isomorphic to the identity functor on $\ProperPEFrame{A}$.
\item The composite $\sigma \circ \kappa$ is naturally isomorphic to the identity functor on $\SimCpx{A}$.
\end{enumerate}

{\bf (i)} Consider a partial epistemic frame $M=\la W,\sim \ra$ in $\ProperPEFrame{A}$. 
By definition, $\kappa\sigma(M) = \la F,\sim' \ra$ where $F$ is the set of facets of $\sigma(M)$.
By \cref{lemma:facets} there is a bijection $W \cong F$, where a world $w \in W$ if associated with the facet $X_w = \{ v^w_a \mid a \in \live{w} \}$ of $\sigma(M)$.
Furthermore, for all $w,w' \in W$, $w \sim_a w'$ iff $X_w \sim'_a X_{w'}$.
Indeed, $w \sim_a w' \iff v^w_a = v^{w'}_a \iff a \in \chi(X_w \cap X_{w'})$.
Hence, $\kappa\sigma(M)$ and $M$ are isomorphic partial epistemic frames. 

Consider a morphism of partial epistemic frames $f: M \rightarrow N$, with $M=\langle W, \sim\rangle$ and $N=\langle W',\sim\rangle$. % in $\ProperPEFrame{A}$.
By definition, $\kappa\sigma(f)$ takes a facet $X_w$ of $\sigma(M)$ to a set of facets of $\sigma(N)$, $\kappa\sigma(f)(X_w)=\{Z \in \sigma(N) \ \mid \ \sigma(f)(X_w) \subseteq Z \}$.
We want to show that this set is equal to $\{X_{w'} \mid w' \in f(w)\}$.
Let $w' \in f(w)$. By definition, $\sigma(f)$ maps each vertex $v^w_a$ of $X_w$ to $v^{w'}_a$, so $\sigma(f)(X_w) \subseteq X_{w'}$.
Conversely, assume $\sigma(f)(X_w) \subseteq Z$. Since $Z$ is a facet of $\sigma(N)$, $Z = X_{w'}$ for some $w' \in W'$.
For each $a \in \live{w}$, the vertex $v^w_a$ of $X_w$ is mapped by $\sigma(f)$ to~$v^{x'}_a$, for $x' \in f(w)$. 
But since $\sigma(f)(v^w_a) \in Z$, we must have $v^{x'}_a = v^{w'}_a$, so $x' \sim_a w'$.
By the saturation property of~$f$, $x' \in f(w)$ implies $w' \in f(w)$ as required.
Therefore $\kappa\sigma$ is an isomorphism also on morphisms of partial epistemic frames.

{\bf (ii)} Consider now a chromatic simplicial complex $\C = \la V, S, \chi \ra$.
Then $\sigma\kappa(\C) = \la V',S',\chi' \ra$ has vertices of the form $V' = \{ v^{Z}_a \mid Z \in \cF(\C) \mbox{ and } a \in \chi(Z) \}$.
We must exhibit a bijection $V \cong V'$ which is a chromatic simplicial map in both directions.
Given $u \in V$ of colour $a$, we map it to $v^Z_a$ where~$Z$ is any facet of~$\C$ that contains~$u$.
This is well-defined since any other facet~$Z'$ also containing~$u$ gives rise to the same vertex $v^{Z'}_a = v^Z_a$, because $Z' \sim_a Z$ in~$\kappa(\C)$.
This map is obviously chromatic, and preserves simplexes because any simplex $Y \in S$ contained in a facet $Z \in \cF(\C)$ will be mapped to $\{ v^Z_a \mid a \in \chi(Y) \} \subseteq X_Z \in \cF(\sigma\kappa(\C))$.
Conversely, we map a vertex $v^{Z}_a \in V'$ to the $a$-coloured vertex of~$Z$.
This is also chromatic, and preserves simplexes because any sub-simplex of $X_Z$ is mapped to a sub-simplex of $Z$.
It is easy to check that our two maps form a bijection, % $V \cong V'$,
therefore $\C$ and $\sigma\kappa(\C)$ are isomorphic.

Lastly, consider a chromatic simplicial map $f: \C \to \D$ with $\C = \langle V,S,\chi\rangle$ and $\D = \langle U,R,\zeta\rangle$.
As above, we write $V'$ and $U'$ for the vertices of $\sigma\kappa(\C)$ and $\sigma\kappa(\D)$, respectively.
By definition, $\sigma\kappa(f)$ maps a vertex $v^{Z}_a \in V'$, with $Z \in \cF(C)$, to the vertex $v^{Y}_a \in U'$, with $Y \in \kappa(f)(Z)$.
So by definition of $\kappa(f)$, $f(Z) \subseteq Y$.
To prove that $\sigma\kappa(f)$ agrees with $f$ up to the isomorphism of the previous paragraph, we need to show that $f$ sends the $a$-coloured vertex of~$Z$ to the $a$-coloured vertex of~$Y$. But this is immediate since $f(Z) \subseteq Y$ and $f$ is chromatic.
\end{proof}

\begin{remark}
Note that the equivalence of categories of \cref{thm:equiv} strictly extends the one of \cite{gandalf-journal}, which was restricted to pure chromatic simplicial complexes on one side and proper epistemic frames on the other.
Indeed, if $\C$ is a pure simplicial complex of dimension~$|A|-1$, it is easy to check that $\kappa(\C)$ is an epistemic frame, since all agents are alive in all worlds.
Moreover, by \cref{prop:singleton}, the morphisms between those frames are single-valued; so we recover the usual notion of Kripke frame morphism that we had in~\cite{gandalf-journal}.
Similarly, when $M$ is a proper epistemic frame, the associated simplicial complex $\sigma(M)$ is pure of dimension $|A|-1$.
When restricted to these subcategories, $\sigma$ and $\kappa$ are the same functors as in \cite{gandalf-journal}.
\end{remark}

\section{Epistemic logics and their simplicial semantics}
\label{sec:models}

%Kripke showed that S4 is sound and complete for reflexive, transitive Kripke models and that B is sound and complete for reflexive, symmetric Kripke models. (What has been given the preceding two paragraphs are the key steps in the soundness proofs. The completeness proofs are substantially more difficult, and cannot be gone into here.) As for $\Sfive$, it is sound and complete for the class of Kripke models where the relation R is reflexive and transitive and symmetric

%Discuss KD45 logics and distributed knowledge as in Fagin's book (the two axioms D1, D2) and compare with Jelle Gerbrandy's axioms.
%Soundness and completeness? What do we lose as axioms or gain?

Let $\AP$ be a countable set of atomic propositions and $A$ a finite set of agents.
The syntax of epistemic logic formulas $\phi \in \mathcal{L}_K$ is generated by the following BNF
grammar:
$$
\varphi ::= p \mid \neg\varphi \mid \varphi \land \varphi \mid
K_a\varphi \qquad p \in \AP,\ a \in A
$$
We will also use the derived operators, defined as usual:
$
\phi \lor \psi := \neg(\neg \phi \land \neg \psi), \ 
{\phi \Rightarrow \psi} := \neg \phi \lor \psi, \
\true := p \lor \neg p, \
\false := \neg \true
$.
Moreover, we assume that the set of atomic propositions is split into a disjoint union of sets, indexed by the agents: $\AP = \bigcup_{a \in A} \AP_a$.
This is usually the case in distributed computing where the atomic propositions represent the local state of a particular agent~$a$.
%In~\cite{gandalf-journal}, this assumption was crucial to define the notion of \emph{local} Kripke models.
%Here, this assumption is much less significant: it 
%This assumption will only matter in applications to distributed computed, %as %in \cite{gandalf-journal}, that we extend in the framework of this paper in Appendix \ref{app:distributed}. 
%Section \cref{sec:knowledge-gain}, and our models have no notion of locality.
For $U \subseteq A$, we write $\AP_U := \bigcup_{a \in U} \AP_a$ for the set of atomic propositions concerning the agents in~$U$.

\subsection{Partial epistemic models and Simplicial models}

In \cref{sec:frames}, we exhibited the equivalence between \emph{partial epistemic frames} and \emph{chromatic simplicial complexes}.
In order to give a semantics to epistemic logic, we need to add some extra information on those structures, by labelling the worlds (resp., the facets) with the set of atomic propositions that are true in this world.
This gives rise to the notions of \emph{partial epistemic models} and \emph{simplicial models}, respectively.
As we shall see, the equivalence of \cref{thm:equiv} extends to models in a straightforward manner.

%As in e.g.~\cite{gandalf-journal,Ditmarsch2020KnowledgeAS} we restrict to models where all the atomic propositions
%are saying something about some local value held by one particular agent. All the examples 
%that we are interested in will fit in that framework. Let $\Values$ be some 
%countable set of values, and $\AP = \{ \inputprop{a}{x} \mid a \in A, x \in \Values \}$ 
%be the set of \emph{atomic propositions}. Intuitively, $\inputprop{a}{x}$ is true if agent 
%$a$ holds the value $x$.

%%%%%%%%%%%%%%%%%%%%%%%%%%%%%%%%%%%%%%%%%%%%%%

%We write $\AP_a$ for the atomic propositions concerning agent $a$. By extension, we write $\AP_U$ with $U \subseteq A$ to be the union of all the $\AP_a$ with $a \in U$. 

\begin{definition}
\label{def:partial-models}
A \emph{partial epistemic model}  $\model = \la W,\sim,L \ra$  over the set of agents $A$ consists of a partial epistemic frame 
 $\la W, \sim \ra$ on~$A$, together with function $L : W \to \Pow{\AP}$.

Given another partial epistemic model $\model' = \la W',\sim',L' \ra$, a \emph{morphism} of partial epistemic models $f : \model \to \model'$
is a morphism of the underlying partial epistemic frames such that for every world $w \in W$ and $w' \in f(w)$,
$L'(w') \cap \AP_{\live{w}} =  L(w) \cap \AP_{\live{w}}$.
\end{definition}

Let us give some intuition about Definition~\ref{def:partial-models}. The set $L(w)$ contains the atomic propositions that are true in the world $w$.
Note that partial epistemic models are simply Kripke models (in the usual sense of modal logics), such that all the accessibility relations $(\sim_a)_{a \in A}$ are PERs.
In particular, one might have expected the additional restriction $L(w) \subseteq \AP_{\live{w}}$, saying that a world only contains atomic propositions concerning the alive agents.
As we will see in Example~\ref{ex:simpmodel}, there are practical cases where this is not desirable, so we do not impose this.
Secondly, recall from Definition~\ref{def:KPERmor} that, given a morphism $f$ of partial epistemic frames, a world $w \in W$ and a world $w' \in f(w)$, it is possible that $w'$ has strictly more alive agents than $w$.
When that is the case, in the definition of model morphisms above, we require that the labellings $L$ and $L'$ are preserved only for those agents that are alive in $w$.

A partial epistemic model is called \emph{proper} when the underlying frame is proper in the sense of \cref{sec:KPER}.
A \emph{pointed partial epistemic model} is a pair $(\model,w)$ where $w$ is a world of $\model$. A \emph{morphism} of pointed partial epistemic models $f : (\model,w) \rightarrow (\model',w')$ is a morphism of the partial epistemic models $f : \model \to \model'$ that preserves the distinguished world, i.e.\ $w' \in f(w)$. 
We denote by $\PMod{A}{\AP}$ (resp.\ $\PModStar{A}{\AP}$) the category of (resp.\ pointed) proper partial epistemic models over the set of agents $A$ and atomic propositions $\AP$.

\medskip
Recall from \cref{thm:equiv} that the worlds of a partial epistemic frame correspond to the facets of the associated chromatic simplicial complex.
Thus, to get a corresponding notion of simplicial model, we label the facets by sets of atomic propositions:

\begin{definition}
A \emph{simplicial model} $\C = \la V, S, \chi, \ell \ra$ over the set of agents~$A$ consists of a chromatic simplicial complex $\la V,S,\chi \ra$ together with a labelling $\ell : {\cal F}(\C) \to \Pow{\AP}$
that associates with each facet $X \in {\cal F}(\C)$ a set of atomic propositions.

Given another simplicial model $\D = \la V', S', \chi', \ell' \ra$, a \emph{morphism} of simplicial models $f : \C \to \D$ is a chromatic simplicial map such that
for all $X \in \cF(\C)$ and all $Y \in \cF(\D)$, if $f(X) \subseteq Y$  then $\ell'(Y) \inter \AP_{\chi(X)} = \ell(X) \inter \AP_{\chi(X)}$.
%$\ell'(Y) \subseteq \ell(X)$
%
%\begin{equation}
%\label{eq:morsimpmod}    
%\forall v \in {\cal F}(C), \ \forall v' \in {\cal F}(C') \mbox{ s.t. } \ f(v) \subseteq v', \ \ell'(v')\cap \AP_{\chi(v)} = \ell(v)\cap \AP_{\chi(v)} 
%\end{equation}
\end{definition}

%\begin{remark}
%In~\cite{gandalf-journal}, our notion of simplicial model was slightly different: the labelling of atomic propositions was on the vertices, not on the facets.
%This is closer to standard distributed computing practice, but requires to impose an extra assumption on the corresponding Kripke models, called \emph{locality}.
%Here we label the facets for two reasons: (i) it avoids talking about locality, and (ii) in some worlds, we might want to talk about the value of a dead agent, e.g.\ in the world $w_3$ in \cref{fig:broadcast}, see \cref{ex:simpmodel} below.
%\end{remark}

%A simplicial model $M$ is \emph{local} if for every agent $a \in A$, $a \in \chi(X \cap Y)$ for some facets $X$ and $Y$ of $M$ implies
%$L(X) \cap \AP_a = L(Y) \cap \AP_a$, i.e., an agent always 
%knows its own values (when it is alive).

A \emph{pointed simplicial model} is a pair $(\C,X)$ where $\C$ is a simplicial model and $X$ is a facet of~$\C$. 
A \emph{morphism} $f: (\C,X) \rightarrow (\D,Y)$ of pointed simplicial models is a morphism  $f : \C  \to \D$  such that $f(X) \subseteq Y$. %\JL{or $f(X) = Y$?} 
We denote by $\SMod{A}{\AP}$ (resp. $\SModStar{A}{\AP}$) the category of (resp. pointed) simplicial models over the set of 
agents $A$ and atomic propositions $\AP$.
The equivalence of \cref{thm:equiv} can be extended to models and pointed models:
 
 \begin{theorem}
 \label{thm:equiv2}
$\kappa$ and $\sigma$ induce an equivalence of categories between $\SMod{A}{\AP}$ (resp.\ $\SModStar{A}{\AP}$) and $\PMod{A}{\AP}$
(resp.\ $\PModStar{A}{\AP}$).
 \end{theorem}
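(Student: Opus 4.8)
The plan is to build directly on \cref{thm:equiv}: since the underlying equivalence between frames and complexes already matches the worlds of $M$ with the facets of $\sigma(M)$ (and vice versa, via \cref{lemma:facets} and \cref{def:adjF}), the only new data to handle are the labellings $L$ and $\ell$, which I would simply transport along this world--facet correspondence. Concretely, I would extend $\kappa$ to models by setting $\kappa(\C) = \la W,\sim,L\ra$, where $\la W,\sim\ra$ is the frame that $\kappa$ assigns to the underlying complex $\la V,S,\chi\ra$ and $L(X) := \ell(X)$ for each facet/world $X$; and extend $\sigma$ by setting $\sigma(\model) = \la V,S,\chi,\ell\ra$ with $\ell(X_w) := L(w)$. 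The latter is well-defined precisely because \cref{lemma:facets} guarantees a bijection between worlds $w$ and facets $X_w$ when $M$ is proper. On morphisms, $\kappa$ and $\sigma$ act exactly as in \cref{def:adjF} and \cref{def:Gmor}; nothing changes at the level of the underlying map, so the entire content of the extension is to verify that these maps respect the labelling conditions in the two definitions of model morphism.

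The heart of the argument is then to check that the two morphism label-conditions are literally the same statement under the correspondence. For a simplicial morphism $f : \C \to \D$, the condition quantifies over all facets $Y$ with $f(X) \subseteq Y$; for a frame morphism it quantifies over all $w' \in f(w)$. These match because, under $\kappa$, we have $Y \in \kappa(f)(X)$ iff $f(X)\subseteq Y$, so the clause $\ell'(Y)\cap \AP_{\chi(X)} = \ell(X)\cap\AP_{\chi(X)}$ becomes exactly $L'(Y)\cap\AP_{\live{X}} = L(X)\cap\AP_{\live{X}}$ with $\live{X}=\chi(X)$. In the $\sigma$ direction the subtle point is that $\sigma(f)(X_w)$ need not itself be a facet, so one must identify all facets $Y \supseteq \sigma(f)(X_w)$: using \cref{def:Gmor} together with the saturation clause of \cref{def:KPERmor}, these are exactly the $X_u$ with $u \sim'_a w'$ for all $a \in \live{w}$, i.e.\ $u \in \sat_{\live{w}}(w') = f(w)$. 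Thus the simplicial condition ``for all $Y \supseteq \sigma(f)(X_w)$'' coincides with the frame condition ``for all $u \in f(w)$'', and the label equalities transport verbatim. This use of saturation is the one place where properness and \cref{def:KPERmor} are genuinely needed, and I expect it to be the only real obstacle; everything else is routine bookkeeping.

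Finally, naturality and the pointed refinement come essentially for free. The natural isomorphisms $\kappa\sigma \cong \mathrm{id}$ and $\sigma\kappa \cong \mathrm{id}$ of \cref{thm:equiv} are isomorphisms of the underlying frames and complexes; since both functors transport labels along the same world--facet bijection, the labels on $\kappa\sigma(M)$ (resp.\ $\sigma\kappa(\C)$) agree with the original ones, so each component is automatically a model isomorphism, and naturality is inherited from the frame/complex level. For the pointed categories $\SModStar{A}{\AP}$ and $\PModStar{A}{\AP}$, the distinguished world $w$ corresponds to the distinguished facet $X_w$, and the two notions of pointed morphism ($w' \in f(w)$ versus $f(X)\subseteq Y$) correspond under the equivalence exactly as the general morphism conditions do; hence the pointed equivalence is an immediate restriction of the unpointed one.
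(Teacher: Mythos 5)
Your proposal is correct and follows essentially the same route as the paper: transport the labellings along the world--facet bijection from \cref{thm:equiv}, and verify that the two morphism label-conditions correspond (via $Y \in \kappa(f)(X) \iff f(X)\subseteq Y$ and, in the other direction, via saturation). Your treatment of the $\sigma$ direction is in fact more explicit than the paper's, which leaves that verification largely implicit.
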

\begin{proof}
For a simplicial model $\C = \la V,S,\chi,\ell \ra$, recall that the worlds of the associated partial epistemic frame are the facets of $\C$; so the labelling in $\kappa(\C)$ is $L(X) = \ell(X)$ for $X \in \cF(\C)$.
For a partial epistemic model $\model = \la W, \sim, L \ra$, recall that the facets of the associated chromatic simplicial complex are of the form $X_w$ for $w \in W$; so to define $\sigma(\model)$, we set $\ell(X_w) = L(w)$.
For the pointed version, we similarly define $\kappa(\C,X) = (\kappa(\C),X)$ and $\sigma(\model,w) = (\sigma(\model), X_w)$.

Checking that this is indeed an equivalence of category is an immediate consequence of \cref{thm:equiv}.
The only detail to check is that the extra conditions on morphisms are preserved:
if $f$ is a morphism of (pointed) simplicial models, then $\kappa(f)$ is a morphism of (pointed) partial epistemic models.
Indeed, $f(X) \subseteq Y$ implies that $Y \in \kappa(f)(X)$ by definition of $\kappa(f)$.
Similarly, if $g$ is a morphism of (pointed) partial epistemic models, then $\sigma(g)$ is a morphism of (pointed) simplicial models.
\end{proof}

% \EG[inline]{TODO: extends Gandalf again}

\begin{example}
\label{ex:simpmodel}
In distributed computing, we are usually interested in reasoning about the input values of the various agents, 
%SR: use only "agents" not "processes"
so the set of atoms is $\AP = \{ \inputprop{a}{x} \mid a \in A, x \in \Values \}$.
The meaning of the atomic proposition $\inputprop{a}{x}$ is that ``agent~$a$ has input value~$x$''.

Consider again the chromatic simplicial complex $\C$ of \cref{ex:simpmodel0}.
Here, we have three agents $A = \{a,b,c\}$ and three values $\Values = \{1,2,3\}$.
%We have explained in \cref{ex:simpmodel0} what are the different views of the three agents.
Hence, we can construct a simplicial model via the following labelling of facets~$\ell : \cF(\C) \to \Pow{\AP}$.
\begin{itemize}
\item For the middle triangle $w_1$, all three agents are alive and successfully communicated their input values. So, it makes sense to set $\ell(w_1) = \{\inputprop{a}{1}, \inputprop{b}{2}, \inputprop{c}{3}\}$.
\item Perhaps more surprisingly, we also choose the same labelling for the six edges adjacent to~$w_1$: $\ell(w_2) = \ell(w_3) = \ell(w_5) = \ell(w_6) = \ell(w_8) = \ell(w_9) =  \{\inputprop{a}{1}, \inputprop{b}{2}, \inputprop{c}{3}\}$.
Indeed, consider for instance the world $w_2$, where agent~$b$ crashed \emph{after} sending its input value to~$a$.
In this world $w_2$, it is the case that agent~$a$ knows that the input of $b$ was $2$.
Hence, the atomic proposition $\inputprop{b}{2}$ must be true in $w_2$, even though the agent $b$ is dead.
\item The worlds, $w_4$, $w_7$ and $w_{10}$ represent situations where one agent died before being able to send any message.
Thus, it is as if only two agents have ever existed, and the labelling only encodes the corresponding two local states:  $\ell(w_4) = \{\inputprop{a}{1}, \inputprop{b}{2}\}$, 
$\ell(w_7) = \{\inputprop{b}{2}, \inputprop{c}{3}\}$ and $\ell(w_{10}) = \{\inputprop{a}{1}, \inputprop{c}{3}\}$.
\item Similarly, $w_0$, $w_{11}$ and $w_{12}$ have labelling $\{\inputprop{a}{1}\}$, $\{\inputprop{b}{2}\}$ and $\{\inputprop{c}{3}\}$ respectively. 
\end{itemize}
We will see in \cref{ex:semantics} some formulas that are true or false in this simplicial model.
\end{example}

\subsection{Semantics of epistemic logic}

Partial epistemic models are a special case of the usual Kripke models; so we can straightforwardly define the semantics of an epistemic formula $\phi\in \lang_K$ in these models.
Formally, gven a pointed partial epistemic model $(\model,w)$,
we define by induction on~$\phi$ the \emph{satisfaction relation} $\model,w \models \phi$
which stands for ``in the world $w$ of the model $\model$, it holds that $\phi$''.
\[ \begin{array}{lcl}
\model,w \models p & \text{iff} & p \in L(w) \\
\model,w \models \neg\phi & \text{iff} & \model,w \not\models \phi \\
\model,w \models \phi\et\psi & \text{iff} & \model,w \models \phi \text{ and } \model,w \models \psi \\
\model,w \models K_a \phi & \text{iff} & \model,w' \models \phi \text{ for all } w' \text{ such that } w \sim_a w'
\end{array} \]
%Formula $\phi$ is {\em valid} iff for all $(\model,s)$, $\model,s \models \phi$. Given $(\model,s)$ and $(\model',s')$, by $(\model,s) \equiv (\model',s')$, for 	``$(\model,s)$ and $(\model',s')$ are {\em modally equivalent},'' we mean that for all $\phi \in \lang_K$: $\model,s \models \phi$ iff $\model',s' \models \phi$. Informally, this means that these pointed models contain the same information. We let $\I{\phi}_\model$ stand for $\{ s \in S \mid \model,s \models \phi \}$. This set is called the {\em denotation} of $\phi$ in $\model$\todo{Not sure we need that paragraph? Nor the corresponding one for simplicial models?}.

We now take advantage of the equivalence with simplicial models (\cref{thm:equiv2}) to define the interpretation of a formula $\phi\in \lang_K(A,P)$ in a simplicial model.
Given a pointed simplicial model $(\C,X)$ where $X \in \cF(C)$ is a facet of $\C$, we define the relation $\C,X \models \phi$ by induction:
\[ \begin{array}{lcl}
\C,X \models p & \text{iff} & p \in \ell(X) \\
\C,X \models \neg\phi & \text{iff} & \C,X \not\models \phi \\
\C,X \models \phi\et\psi & \text{iff} & \C,X \models \phi \text{ and } \C,X \models \psi \\
\C,X \models K_a \phi & \text{iff} & \C,Y \models \phi \text{ for all } Y \in \cF(C) \text{ such that } a \in \chi(X \inter Y)
\end{array} \]

%Validity and modal equivalence are also defined similarly. Formula $\phi$ is {\em valid} iff for all $(\C,X)$, $\C,X \models \phi$; and given $(\C,X)$ and $(\C',X')$, by $(\C,X) \equiv (\C',X')$ we mean that for all $\phi \in \lang_K(A,P)$: $\C,X \models \phi$ iff $\C',X' \models \phi$. 

\begin{example}
\label{ex:semantics}
In the simplicial model of \cref{ex:simpmodel}, we have, for instance:
\begin{itemize}
    \item In world $w_1$, agent $a$ knows the values of all three agents, i.e.\ $\C,w_1 \models K_a (\inputprop{a}{1} \wedge \inputprop{b}{2} \wedge \inputprop{c}{3})$ since $w_2$ and $w_3$ are indistinguishable from $w_1$ by agent $a$ and $\inputprop{a}{1} \wedge \inputprop{b}{2} \wedge \inputprop{c}{3}$ is true in these three facets. This corresponds to the view of process~$a$, see Example \ref{ex:simpmodel0}. 
    \item In $w_3$, agent $a$ knows the values of all three agents but agent $b$ only knows the values of $a$ and $b$: $\C,w_3 \models K_a (\inputprop{a}{1} \wedge \inputprop{b}{2} \wedge \inputprop{c}{3})$ but $\C,w_3 \models K_b (\inputprop{a}{1} \wedge \inputprop{b}{2})$ and $\C,w_3 \models \neg K_b\, \inputprop{c}{3}$ since in facet $w_4$ do not have $\inputprop{c}{3}$.
    Similarly, in $w_4$, agents $a$ and $b$ know each other's values, but do not know the input value of agent $c$: $\C,w_4 \models K_a (\inputprop{a}{1} \wedge \inputprop{b}{2})$, $\C,w_4 \models K_b (\inputprop{a}{1} \wedge \inputprop{b}{2})$, $\C,w_4 \models (\neg K_a\, \inputprop{c}{3})  \wedge (\neg K_b\, \inputprop{c}{3})$
    \item %We also have statements like
    In world $w_1$, agent $a$ knows that agent $b$ knows about their respective input values: 
    $\C,w_1 \models K_a K_b (\inputprop{a}{1} \wedge \inputprop{b}{2})$ but agent $a$ does not know if agent $b$ knows about the value of agent $c$: $\C,w_1 \models \neg K_a K_b\, \inputprop{c}{3}$ (because of $w_3$).
\end{itemize}
\end{example}

\noindent
As expected, our two interpretation of $\lang_K$ agree up to the equivalence of \cref{thm:equiv2}: 

\begin{proposition} \label{prop:truth}
Given a pointed simplicial model $(\C,X)$, $\C,X \models \varphi$ iff $\kappa(\C,X) \models \varphi$.
Conversely, given a pointed proper partial epistemic model $(M,w)$, $M,w \models \varphi$ iff $\sigma(M,w) \models \varphi$.
\end{proposition}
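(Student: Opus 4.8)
The plan is to prove the first statement ($\C,X \models \varphi$ iff $\kappa(\C,X) \models \varphi$) by structural induction on the formula $\varphi$; the second statement then follows immediately from the fact, established in~\cref{thm:equiv2}, that $\sigma$ and $\kappa$ are mutually inverse up to natural isomorphism, so that $(M,w) \cong \kappa(\sigma(M,w))$ and one can transport the equivalence across this isomorphism. So the real content is the inductive argument relating the two satisfaction relations.

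First I would set up the correspondence explicitly. Recall from~\cref{def:partial-models} and~\cref{thm:equiv2} that the worlds of $\kappa(\C)$ are precisely the facets of $\C$, with labelling $L(X) = \ell(X)$, and that $X \sim_a Y$ in $\kappa(\C)$ holds iff $a \in \chi(X \cap Y)$. Thus the base case $\varphi = p$ is immediate: $\C,X \models p$ iff $p \in \ell(X) = L(X)$ iff $\kappa(\C),X \models p$. The Boolean cases $\varphi = \neg\psi$ and $\varphi = \psi_1 \wedge \psi_2$ are routine: both satisfaction relations are defined by the same clauses, so they follow directly from the induction hypothesis applied to the subformulas.

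The only case requiring attention is the modal operator $\varphi = K_a\psi$. Here I would observe that the two definitions of the $K_a$ clause are verbatim the same once the indistinguishability relation is translated: on the simplicial side, $\C,X \models K_a\psi$ quantifies over all facets $Y \in \cF(\C)$ with $a \in \chi(X \cap Y)$; on the frame side, $\kappa(\C),X \models K_a\psi$ quantifies over all worlds $Y$ with $X \sim_a Y$. Since $X \sim_a Y$ in $\kappa(\C)$ is \emph{defined} to mean exactly $a \in \chi(X \cap Y)$ (\cref{def:adjF}), these two quantifications range over the same set of facets/worlds. Applying the induction hypothesis $\C,Y \models \psi \iff \kappa(\C),Y \models \psi$ uniformly over this common index set then yields the equivalence for $K_a\psi$.

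I do not expect a genuine obstacle here; the proposition is essentially a sanity check confirming that the simplicial semantics was \emph{defined} so as to match the Kripke semantics under the equivalence of~\cref{thm:equiv2}. The one subtlety worth flagging is that the labelling $L(w)$ need not be contained in $\AP_{\live{w}}$ (as remarked after~\cref{def:partial-models}), so atomic propositions about dead agents may be present; but since the base case simply reads off membership in $L(X) = \ell(X)$ and makes no use of liveness, this causes no difficulty. The second statement is then obtained by applying the first to the simplicial model $\sigma(M)$ at the facet $X_w$, together with the isomorphism $\kappa(\sigma(M,w)) \cong (M,w)$ from~\cref{thm:equiv2}, under which satisfaction is preserved.
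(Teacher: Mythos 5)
Your proposal is correct and matches the paper's intent exactly: the paper dispatches this proposition with the single remark that it is ``straightforward by induction on the structure of the formula,'' and your induction (with the key observation that the $K_a$ clauses quantify over the same set of facets/worlds because $X \sim_a Y$ in $\kappa(\C)$ is defined as $a \in \chi(X\cap Y)$) is precisely the argument being elided. Deriving the second statement from the first via the natural isomorphism $\kappa(\sigma(M,w)) \cong (M,w)$ of \cref{thm:equiv2} is also a legitimate and standard way to finish.
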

%\begin{proof}
This is straightforward by induction on the structure of the formula $\varphi$.
%\end{proof}

\subsection{Reasoning about alive and dead agents}
\label{sec:aliveanddead}

In \cref{ex:semantics}, we only considered formulas talking about what the agents know about each other's input values.
It is a natural idea to also contemplate formulas expressing which agents are alive or dead, for example ``agent $a$ knows that agent $b$ is dead''.
Fortunately, such formulas can already be expressed in our logic without any extra work, as derived operators 
$\deadprop{a} \,:=\, K_a\, \false$, and
$\aliveprop{a} \,:=\, \neg \deadprop{a}$. 
It is easy to check that indeed: % the semantics of these formulas is, as expected:
\begin{itemize}
\item \makebox[4.5cm]{In partial epistemic models,\hfill} $\model,w \models \aliveprop{a} \quad \text{iff} \quad w \sim_a w$.
\item \makebox[4.5cm]{In simplicial models,\hfill} $\C,X \models \aliveprop{a} \quad \text{iff} \quad a \in \chi(X)$.
\end{itemize}

\begin{example}
\label{ex:alive}
Consider again the simplicial model of Examples \ref{ex:simpmodel0} and \ref{ex:simpmodel}, and its corresponding partial epistemic model of Example \ref{ex:frame}. It is easy to see that:
\begin{itemize}
\item $M,w_3 \models \aliveprop{b}\wedge \aliveprop{a}$ but $M,w_3 \models \deadprop{c}$, 
    \item $M, w_1 \models \neg K_a\, \aliveprop{c}$ since e.g. $M, w_3 \models \deadprop{c}$ whereas $M, w_1 \models \aliveprop{c}$,
    \item Agents $a$ and $b$ know, in world $w_4$, that $c$ is dead: $M,w_4 \models K_b\,\deadprop{c} \wedge K_a\,\deadprop{c}$ since, first, in world $w_3$ (which is indistinguishable from $w_3$ by agent $b$), agent $c$ is not alive, and second, in world $w_5$ (which is indistinguishable from $w_3$ by agent $a$, $c$ is not alive either. 
\end{itemize}
In $w_4$ everything looks as if agents $a$ and $b$ were executing solo, without $c$ ever existing, whereas in worlds $w_3$ and $w_5$, agent $c$ dies at some point, but has been active and its local value has been observed by one of the other agents. 
\end{example}

\subsection{The axiom system $\KBfour$}

\label{sec:axiomsystem}

We consider the usual proof theory of normal modal logics, with all propositional tautologies, closure by modus ponens, and the necessitation rule: if $\phi$ is a tautology, then $K_a \phi$ is a tautology.
In normal modal logics, there is a well-known correspondence between properties of Kripke models that we consider, and corresponding axioms that make the logic sound and complete~\cite{sep-logic-modal}.
In our case, partial epistemic models are symmetric and transitive. Thus we get the logic $\KBfour$, obeying the following additional axioms.
\begin{align*}
& \textbf{K}: K_a (\phi \Rightarrow \psi) \implies (K_a \phi \Rightarrow K_a \psi)\\
& \textbf{B}: \phi \implies K_a \neg K_a \neg \phi \\ %(where $\L_a \phi=\neg K_i \neg \phi$ is the dual of $K_a \phi$)
& \textbf{4}: K_a \phi \implies K_a K_a \phi
\end{align*}
%
%All Kripke-style semantics on Kripke models obey Axiom \textbf{K}. Axiom \textbf{B} is valid exactly on the class of Kripke models whose accessibility relation is symmetric, whereas Axiom \textbf{4} is valid when the accessibility relation is transitive.
The difference between $\KBfour$ and the more standard multi-agent epistemic logics $\Sfive$ is that we do 
not necessarily have axiom \textbf{T}: $K_a \phi \implies \phi$.
Axiom \textbf{T} is valid in Kripke models whose accessibility relation is reflexive, which we do not enforce.
The logic $\KBfour$ is in fact equivalent to $\mathbf{KB45_n}$ (see e.g.~\cite{sep-logic-modal}), so we also have for free the Axiom \textbf{5}, which corresponds to Euclidean Kripke frames.
We have the following well-known result, see e.g.~\cite{fagin}.

\begin{theorem}
\label{thm:completenessKB4}
The axiom system $\KBfour$ is sound and complete with respect to the class of partial epistemic models.
\end{theorem}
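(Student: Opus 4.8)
The plan is to treat this as a standard correspondence-theory result rather than proving it from scratch, since $\KBfour$ is a classical normal modal logic. The key observation is that partial epistemic models are, as the paper itself notes, ordinary Kripke models in which each accessibility relation $\sim_a$ happens to be a partial equivalence relation, i.e.\ symmetric and transitive. So the theorem reduces to the textbook fact that the modal logic axiomatised by \textbf{K}, \textbf{B}, and \textbf{4} is sound and complete with respect to the class of frames whose relations are all symmetric and transitive. I would therefore organise the proof into the two usual halves: soundness, then completeness via the canonical model construction.

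For soundness, I would argue by the standard correspondence between each axiom and a first-order frame condition. First I would recall that \textbf{K} and the necessitation rule are valid on every Kripke frame, so these require no hypothesis on $\sim_a$. Then I would verify that \textbf{B} (the schema $\phi \Rightarrow K_a \neg K_a \neg \phi$) is valid on every symmetric frame, and that \textbf{4} (the schema $K_a \phi \Rightarrow K_a K_a \phi$) is valid on every transitive frame; both are one-line pointwise checks at an arbitrary world $w$. Since in a partial epistemic model each $\sim_a$ is by definition symmetric and transitive, every instance of every axiom is valid and validity is preserved by modus ponens and necessitation, giving soundness.

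For completeness, the approach is the Henkin-style canonical model construction. I would build the canonical model $\model^c$ whose worlds are the maximal $\KBfour$-consistent sets of formulas, with $w \sim_a^c v$ iff $\{\phi \mid K_a\phi \in w\} \subseteq v$, and $L(w) = w \cap \AP$. The Truth Lemma (proved by induction on $\phi$, using the Lindenbaum lemma for the modal case) then gives $\model^c, w \models \phi$ iff $\phi \in w$. The crucial extra step, and where the real content lies, is showing that the canonical accessibility relations $\sim_a^c$ are themselves symmetric and transitive, so that $\model^c$ genuinely belongs to the class of partial epistemic models. This is exactly the standard canonicity argument: axiom \textbf{B} forces symmetry of $\sim_a^c$ and axiom \textbf{4} forces transitivity of $\sim_a^c$, each proved by chasing membership of boxed formulas through maximal consistent sets. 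Completeness then follows in the usual way: any non-theorem $\phi$ has $\neg\phi$ in some maximal consistent set $w$, so $\model^c, w \not\models \phi$, exhibiting a partial epistemic model that refutes $\phi$.

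I expect the only genuine obstacle to be the canonicity step, i.e.\ verifying that the canonical relations inherit symmetry from \textbf{B} and transitivity from \textbf{4}; everything else is routine. I emphasise, though, that this is entirely classical material: since the paper explicitly states (citing \cite{fagin,sep-logic-modal}) that this is a \emph{well-known} result and that partial epistemic models are just Kripke models with symmetric-transitive accessibility relations, the honest write-up is to note the correspondence and defer to the standard completeness theorem for normal modal logics determined by their frame conditions, rather than to reproduce the canonical model argument in full.
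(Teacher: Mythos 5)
Your proposal is correct and matches the paper's treatment: the paper gives no proof of this theorem, citing it as a well-known result (e.g.\ from \cite{fagin}), exactly as you conclude is the honest write-up. Your sketch of the standard argument --- soundness via the frame correspondences for \textbf{B} (symmetry) and \textbf{4} (transitivity), completeness via the canonical model together with the canonicity of those two axioms --- is the standard proof being implicitly invoked.
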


\noindent
Here are a few examples of valid formulas in $\KBfour$ (proofs are in \cref{app:appA}), related to the liveness of agents.
\begin{itemize}
\item \makebox[5.5cm]{Dead agents know everything:\hfill} $\KBfour \vdash \deadprop{a} \implies K_a \phi$.
\item \makebox[5.5cm]{Alive agents know they are alive:\hfill} $\KBfour \vdash \aliveprop{a} \implies K_a\, \aliveprop{a}$.
\item \makebox[5.5cm]{Alive agents satisfy Axiom \textbf{T}:\hfill} $\KBfour \vdash \aliveprop{a} \implies (K_a \phi \Rightarrow \phi)$.
\item \makebox[5.5cm]{Only alive agents matter for $K_a \phi$:\hfill} $\KBfour \vdash K_a \phi \iff (\aliveprop{a} \Rightarrow K_a \phi)$.
\end{itemize}

As an application of the fourth tautology, notice that a formula of the form $K_a K_b \phi$ is equivalent to $K_a (\aliveprop{b} \Rightarrow K_b \phi)$.
So, to check whether this formula is true in some pointed model $(M,w)$, we only need to check that $K_b \phi$ is true in the worlds $w' \sim_a w$ where $b$ is alive.

\subsection{Completeness for simplicial models}
\label{sec:completeness}

According to \cref{thm:equiv2}, simplicial models are equivalent to \emph{proper} partial epistemic models. 
Thus \cref{thm:completenessKB4} does not apply directly to simplicial models, and some extra care must be taken to deal with this ``proper'' requirement.
Indeed, it is easy to check that the two formulas below are true in every simplicial model; but they are not provable in $\KBfour$.
\begin{align*}
& \makebox[1cm]{$\mathbf{NE}$:\hfill} \textstyle \bigvee_{a \in A} \aliveprop{a}\\
& \makebox[1cm]{$\mathbf{SA_a}$:\hfill} \textstyle \left(\aliveprop{a} \land \bigwedge_{b \neq a} \deadprop{b}\right) \implies (\phi \implies K_a\,\phi)
\end{align*}
The formula $\mathbf{NE}$ (Non-Emptiness) says that in every world, there is at least one agent that is alive; and the formula $\mathbf{SA_a}$ (Single Agent) says that if there is exactly one alive agent~$a$, this agent knows everything that is true about the world.
It is straightforward to check that:

\begin{proposition}
\label{cor:35}
The axiom system $\KBfour+\mathbf{NE}+(\mathbf{SA_a})_{a \in A}$ is sound with respect to the class of simplicial models.
\end{proposition}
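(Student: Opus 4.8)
The plan is to establish soundness by showing that each axiom and rule of the proof system preserves validity over the class of simplicial models. Since \cref{thm:completenessKB4} already gives soundness of the base system $\KBfour$ with respect to all partial epistemic models, and \cref{thm:equiv2} together with \cref{prop:truth} tells us that simplicial models correspond exactly to \emph{proper} partial epistemic models with the satisfaction relations agreeing on both sides, I would first reduce the problem to partial epistemic models. Concretely, by \cref{prop:truth} a formula $\varphi$ is valid in all simplicial models if and only if it is valid in all proper partial epistemic models. Since proper partial epistemic models form a subclass of all partial epistemic models, the $\KBfour$ axioms and rules (\textbf{K}, \textbf{B}, \textbf{4}, modus ponens, necessitation) remain sound here a fortiori. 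So the entire burden is to verify that the two new schemata $\mathbf{NE}$ and $(\mathbf{SA_a})_{a\in A}$ are valid in every proper partial epistemic model.

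For $\mathbf{NE}$, I would fix an arbitrary pointed proper partial epistemic model $(\model,w)$ and unfold the derived operator: $\model,w \models \bigvee_{a\in A}\aliveprop{a}$ means that $w\sim_a w$ for some $a\in A$, using the established equivalence $\model,w\models\aliveprop{a}$ iff $w\sim_a w$. But the \emph{properness} condition on the frame states precisely that in every world at least one agent is alive, i.e.\ there exists $a\in A$ with $w\sim_a w$. Hence $\mathbf{NE}$ holds at every world, so it is valid.

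For $\mathbf{SA_a}$, I would again take a world $w$ and assume the antecedent $\aliveprop{a}\land\bigwedge_{b\neq a}\deadprop{b}$ holds at $w$; this says $a$ is the unique alive agent, i.e.\ $\live{w}=\{a\}$. The goal is to show $\model,w\models K_a\bigwedge_{b\neq a}\deadprop{b}$, which by the semantics of $K_a$ requires that every $w'$ with $w\sim_a w'$ satisfies $\bigwedge_{b\neq a}\deadprop{b}$, i.e.\ $\live{w'}=\{a\}$. The key step is that properness forces $[w]_a=\{w\}$: indeed any $w'\neq w$ with $w\sim_a w'$ would be a world indistinguishable from $w$ by the only agent alive in $w$, contradicting the properness requirement that some agent alive in $w$ distinguishes $w$ from $w'$. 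Since $a$ is alive in $w'$ (because $w\sim_a w'$ gives $w'\sim_a w'$ by symmetry and transitivity), and $w'=w$ forces $\live{w'}=\{a\}$, the conclusion follows. I expect this to be the main (though still mild) obstacle, since it is the one place where the properness condition is used nontrivially, in the stronger form that distinguishing agents must themselves be alive. Finally, I would remark that validity of all axiom schemata together with closure of validity under modus ponens and necessitation (both standard for any class of Kripke models closed under the relevant operations) yields soundness of the full system, completing the argument.
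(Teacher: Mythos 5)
Your proof is correct and follows essentially the same route as the paper: reduce to proper partial epistemic models, observe $\mathbf{NE}$ is immediate from properness, and for $\mathbf{SA_a}$ use properness to show that the $\sim_a$-class of a world whose only alive agent is $a$ is a singleton. Your version is slightly more explicit than the paper's about the a fortiori soundness of the $\KBfour$ part and the transfer via \cref{prop:truth}, but the substance is identical.
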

\begin{proof}
Let us first consider axiom \makebox[1cm]{$\mathbf{NE}$:\hfill} $\bigvee_{a \in A} \aliveprop{a}$. 
Take a proper epistemic model $M=\langle W,\sim\rangle$. 
To prove that for all $w \in W$, $M, w \models \mathbf{NE}$, we have to prove that there exists $a \in A$ such that $w \sim_a w$. This is by definition of properness.

We now turn to axiom \makebox[1cm]{$\mathbf{SA_a}$:\hfill} $\left(\aliveprop{a} \land \bigwedge_{b \neq a} \deadprop{b}\right) \implies (\phi \implies K_a\,\phi)$. Take $M$ again, a proper epistemic frame, and $w \in W$ such that $M, w \models \aliveprop{a} \land \bigwedge_{b \neq a} \deadprop{b}$. We must prove that, assuming $M, w \models \phi$, we have 
$M, w \models K_a \phi$. 
As $M, w \models \aliveprop{a} \land \bigwedge_{b \neq a} \deadprop{b}$, $w \sim_a w$ and, for all $b \neq a$, there is no $w'$ such that $w\sim_b w'$. 

Consider now any $u$ such that $w \sim_a u$, we need to show that $M,u \models \phi$. But in $w$, only $a$ is alive, and by the properness property of $M$, such a $u$ is necessarily equal to $w$. This is because if $u \neq w$, it has to be distinguished by some agent that is alive in $w$, which can only be $a$ by hypothesis on $w$, which contradicts the fact that $w \sim_a u$. Therefore we trivially have $M, u \models \phi$ since $M,w \models \phi$ by assumption. 
\end{proof}

We also believe that this axiom system is complete; but the proof is more involved and we leave it for future work.
We provide a proof sketch below.

\begin{conjecture}
\label{conj:completeness}
$\KBfour+\mathbf{NE}+(\mathbf{SA_a})_{a \in A}$ is complete w.r.t.\ the class of simplicial models.
\end{conjecture}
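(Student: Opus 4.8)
The plan is to follow the standard Henkin-style route and then repair properness. By \cref{thm:equiv2} and \cref{prop:truth}, a formula is satisfiable in a simplicial model if and only if it is satisfiable in a \emph{proper} partial epistemic model; so completeness w.r.t.\ simplicial models amounts to showing that every $\KBfour+\mathbf{NE}+(\mathbf{SA_a})_{a\in A}$-consistent formula~$\varphi$ is satisfiable in some proper partial epistemic model. First I would build the canonical model of the logic, exactly as for \cref{thm:completenessKB4}: worlds are maximal consistent sets, $w\sim_a w'$ iff $\{\psi \mid K_a\psi \in w\}\subseteq w'$, and $L(w)=\{p\in\AP\mid p\in w\}$. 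The axioms $\mathbf{NE}$ and $\mathbf{SA_a}$ are variable-free and define first-order frame conditions (``every world has an alive agent'', and ``a sole survivor~$a$ sees only worlds in which the other agents are dead''), hence they are canonical; so the canonical model is a partial epistemic model validating all the axioms, and $\varphi$ holds at some world~$w_0$.

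The substantial work is to convert this canonical model $M$ into a proper one still satisfying $\varphi$ at~$w_0$. I would pass to the submodel generated by $w_0$ (filtrating through the subformulas of~$\varphi$ if a finite model is wanted), then quotient by the largest bisimulation and try to show that the resulting bisimulation-minimal model~$M'$ is proper: concretely, that any two distinct worlds of~$M'$ are separated by some agent that is \emph{alive} in one of them. The tools for this are the derived validities of \cref{sec:axiomsystem}: dead agents know everything, and $K_a\varphi \iff (\aliveprop{a}\Rightarrow K_a\varphi)$, so the only knowledge capable of separating two worlds is that of their alive agents, while $\mathbf{SA_a}$ pins down the worlds in which a single agent survives. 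Once $M'$ is shown to be proper, \cref{thm:equiv2} turns it into a simplicial model satisfying~$\varphi$ through \cref{prop:truth}.

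The hard part is exactly this properness-repair step, which is why the statement is left as a conjecture. Bisimulation collapse alone does \emph{not} suffice: two worlds can be modally distinguished only by an atomic proposition $p\in\AP_b$, or only by facts about \emph{dead} agents, while no alive agent's indistinguishability relation tells them apart; in the simplicial picture these worlds would have to become the same facet, yet they carry incompatible labels. The delicate case is that of worlds with very few alive agents, most sharply a sole survivor~$a$, where $\sim_a$ necessarily collapses to the identity in every simplicial model but need not do so in the canonical model, so a consistent formula forcing~$a$ to be unsure of its own local value seems to have no proper model. Making the repair go through therefore appears to require either restricting attention to labellings determined by the alive agents' local states, or augmenting the system with local-proposition axioms tying each $p\in\AP_a$ to $K_a p$ on the worlds where~$a$ is alive; identifying the exact extra hypothesis, and then verifying that the minimal model is proper, is the core of the completeness argument we defer to the full version.
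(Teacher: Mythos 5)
Your skeleton coincides with the paper's proof sketch: reduce the problem to proper partial epistemic models via \cref{prop:truth}, build the canonical model $\Mc$ of $\KBfour+\mathbf{NE}+(\mathbf{SA_a})_{a\in A}$, note that $\mathbf{B}$ and $\mathbf{4}$ make it a partial epistemic model and $\mathbf{NE}$ gives a live agent in every world, and isolate properness as the remaining obstacle. Where you diverge is in the repair step, and you go in the opposite direction from the paper: you propose to take the generated submodel and \emph{quotient} by the largest bisimulation, whereas the paper \emph{unwinds} $\Mc$ into a bisimilar model $U(\Mc)$ whose worlds are paths $(w_0,a_1,w_1,\ldots,a_k,w_k)$. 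Your observation that the quotient cannot work is correct --- two worlds separated only by propositions about dead agents are not bisimilar, hence never merged, yet no live agent's relation distinguishes them --- but it does not bear on the unravelling, which handles exactly these configurations: two distinct paths sharing at least two live agents are separated because a one-step $a$-extension of $\pi$ is never in the $b$-class of $\pi$ for $b\neq a$. So the first thing to change in your write-up is to replace the collapse by the unwinding construction.

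That said, your sole-survivor concern is substantive and is precisely the residual case that the paper's sketch delegates to $(\mathbf{SA_a})_{a\in A}$ without detail. In any proper frame, a world $w$ with $\live{w}=\{a\}$ must satisfy $[w]_a=\{w\}$, so $\psi\Rightarrow K_a\psi$ holds there for \emph{every} $\psi$; but $\mathbf{SA_a}$ only yields the instance $\psi=\bigwedge_{b\neq a}\deadprop{b}$. A formula such as $\aliveprop{a}\land\bigwedge_{b\neq a}\deadprop{b}\land p\land\neg K_a p$ with $p\in\AP_a$ is satisfied in the two-world partial epistemic model where only $a$ is alive, $\sim_a$ is total, and $p$ holds in exactly one world; that model validates all the axioms, so the formula is consistent, yet it has no proper model, and unwinding cannot help since the two worlds differ on $p$. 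This suggests the axiom system needs a strengthening along the lines you indicate (e.g.\ a schematic version of $\mathbf{SA_a}$, or axioms tying $\AP_a$ to $K_a$), and it is a legitimate criticism of the paper's own sketch rather than a defect of your proposal alone. In short: adopt the unravelling for the multi-agent worlds, and treat the single-agent worlds as the genuinely open point.
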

\begin{proof}[Proof sketch]
We prove completeness for the class of proper partial epistemic models.
Completeness for simplicial models then follows directly by \cref{prop:truth}.
%
%The axiom system $\KBfour+\mathbf{NE}+(\mathbf{SA_a})_{a \in A}$ is complete w.r.t.\ the class of proper partial epistemic models.
%
As usual in completeness proofs, we build a canonical model $\Mc$ whose worlds are maximal and consistent sets of formulas (for the logic $\KBfour+\mathbf{NE}+(\mathbf{SA_a})_{a \in A}$).
The usual machinery (Lindenbaum's Lemma, the Truth Lemma) works as expected.

All we have to do to complete the proof is show that $\Mc$ is a proper partial epistemic model.
Showing that $\Mc$ is a partial epistemic model is standard (see e.g.~\cite{sep-logic-modal}): the axioms $\mathbf{B}$ and $\mathbf{4}$ are used to prove symmetry and transitivity, respectively.
However, the model $\Mc$ is in fact not proper: while the axiom $\mathbf{NE}$ ensures that every world has at least one alive agent, non-proper behaviour (such as the one of \cref{ex:proper}) can occur within $\Mc$.

To fix this, we resort to the classic \emph{unwinding} construction.
From $\Mc$, we build an unwinded model $U(\Mc)$ whose worlds are paths in $\Mc$, of the form $(w_0, a_1, w_1, \ldots, a_k, w_k)$, where each $w_i$ is a world of $\Mc$ and for all $i$, $w_i \sim_{a_{i+1}} w_{i+1}$.
This model $U(\Mc)$ can be shown to be bisimilar to $\Mc$. Moreover, $U(\Mc)$ is proper: behaviours such as the one of \cref{ex:proper} are ruled out by the unwinding construction.
The only remaining possibility for non-properness concerns worlds with a unique agent; they are ruled out by the axioms~$\mathbf{SA_a}$.
\end{proof}

The axioms $\mathbf{NE}$ and $\mathbf{SA_a}$ embody the ``hidden'' assumptions in the use of simplicial models.
Note that we could easily get rid of $\mathbf{NE}$ by allowing the existence of a fictitious $(-1)$-dimensional simplex representing an empty world. This is known in geometry as \emph{augmented} simplicial complexes.
However, the axioms $\mathbf{SA_a}$ are more substantial, and reflect usual assumptions of distributed computing modelling.

\subsection{Knowledge gain}
\label{sec:knowledge-gain}

In~\cite{gandalf-journal}, a key property of the logic used in distributed computing applications is the so-called ``knowledge gain'' property.
This principle says that agents cannot acquire new knowledge along morphisms of simplicial models. Namely, what is known in the image of a morphism was already known in the domain.
The knowledge gain property is used when we want to prove that a certain simplicial map~$f : \C \to \D$ cannot exist.
To achieve this, we choose a formula~$\phi$ and show 
%\begin{itemize}
%\item 
that $\phi$ is true in every world of~$\D$,
and that $\phi$ is false in at least one world of~$\C$.
%\end{itemize}
Then by the knowledge gain property, the map $f$ does not exist. Such a formula $\phi$ is called a \emph{logical obstruction}.
While we are not interested in proving distributed computing results in this paper (the synchronous crash model of \cref{fig:broadcast} is merely an illustrative example), we still check that some version of the knowledge gain property holds, as a sanity check towards future work.

%\subparagraph{Knowledge gain for well-formed formulas.}

The knowledge gain property that appeared in~\cite{gandalf-journal} applied to \emph{positive} epistemic formulas, i.e., they are cannot talk about what an agent does not know.
Here, we also require an additional condition, which says that every atomic proposition $p \in \AP_a$ that appears in the formula must be \emph{guarded} by a conditional making sure that agent~$a$ is alive.
This is because there might be agents that are dead in the domain of a morphism, but are alive in the codomain.

Formally, the fragment of \emph{guarded positive epistemic formulas} $\phi \in \PL{K,\text{alive}}$ is defined by the grammar
$
\varphi ::= \aliveprop{B} \Rightarrow \psi \mid \varphi \land \varphi \mid \varphi \lor \varphi \mid 
K_a\varphi, \ a \in A,\; B \subseteq A,\; \psi \in \PropL{B}
$, 
where the formula $\aliveprop{B}$ stands for $\bigwedge_{a \in B} \aliveprop{a}$, and the formula $\psi \in \PropL{B}$ is a \emph{propositional formula restricted to the agents in $B$}, defined formally by the grammar:
$
\psi ::= p \mid \neg \psi \mid \psi \land \psi, \ p \in \AP_B
$.

\begin{theorem}[knowledge gain]
\label{thm:lose-knowledge-3}
Consider simplicial models $\C=\la V,S,\chi,\ell \ra$ and
$\D = \la V',S',\chi',\ell' \ra$, and a  morphism of pointed simplicial models $f : (\C,X) \to (\D,Y)$.
Let $\varphi \in \PL{K,\textup{alive}}$ be a guarded positive epistemic formula.
Then $\D,Y \models \varphi$ implies $\C,X \models \varphi$.
\end{theorem}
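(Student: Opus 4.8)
The plan is to prove the statement by structural induction on the guarded positive epistemic formula $\varphi \in \PL{K,\text{alive}}$, showing that $\D,Y \models \varphi$ implies $\C,X \models \varphi$ for every compatible pair of pointed facets related by a morphism $f:(\C,X)\to(\D,Y)$, i.e.\ with $f(X)\subseteq Y$. Since the quantification over pairs $(X,Y)$ is needed to make the $K_a$ case go through, I would actually strengthen the induction hypothesis to: \emph{for all facets $X\in\cF(\C)$ and $Y\in\cF(\D)$ with $f(X)\subseteq Y$, we have $\D,Y\models\varphi \Rightarrow \C,X\models\varphi$.} The original statement is then the instance given by the pointed morphism.

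The base case is the guarded atomic case $\varphi = (\aliveprop{B}\Rightarrow\psi)$ with $\psi\in\PropL{B}$ a propositional formula over $\AP_B$. First I would dispose of the trivial subcase where $\aliveprop{B}$ fails in $\C$, i.e.\ some $b\in B$ is dead in $X$ (meaning $b\notin\chi(X)$): then $\C,X\models\aliveprop{B}\Rightarrow\psi$ holds vacuously. So assume $B\subseteq\chi(X)$, meaning all agents of $B$ are alive in $X$. By \cref{prop:singleton} applied to $\kappa(f)$ (equivalently, directly from $f(X)\subseteq Y$ and the fact that $f$ is a chromatic simplicial map preserving colours), we get $\chi(X)\subseteq\chi(Y)$, so $B\subseteq\chi(Y)$ and hence $\D,Y\models\aliveprop{B}$ as well. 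Now the morphism condition on simplicial models says $\ell'(Y)\cap\AP_{\chi(X)} = \ell(X)\cap\AP_{\chi(X)}$; since $\AP_B\subseteq\AP_{\chi(X)}$, the labels of $X$ and $Y$ agree on all atoms of $\AP_B$. A routine sub-induction on the propositional structure of $\psi$ then shows $\D,Y\models\psi \iff \C,X\models\psi$, which closes this case. This label-agreement step, resting on the guarding by $\aliveprop{B}$, is exactly what makes the theorem work and is the whole reason the formulas must be guarded.

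The inductive cases for $\wedge$ and $\vee$ are immediate from the induction hypothesis applied to the same pair $(X,Y)$, using that conjunction and disjunction are monotone and that the hypothesis is an implication in the right direction. The interesting inductive case is $\varphi = K_a\psi$. Suppose $\D,Y\models K_a\psi$; I must show $\C,X\models K_a\psi$, i.e.\ $\C,X'\models\psi$ for every facet $X'$ with $a\in\chi(X\cap X')$. Fix such an $X'$. The key move is to produce a facet $Y'$ of $\D$ with $a\in\chi(Y\cap Y')$ and $f(X')\subseteq Y'$, so that $\D,Y'\models\psi$ follows from $\D,Y\models K_a\psi$ and then $\C,X'\models\psi$ follows from the induction hypothesis applied to the pair $(X',Y')$. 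To build $Y'$: since $X\sim_a X'$ in $\kappa(\C)$, translating through the equivalence of \cref{thm:equiv2} and using that $\kappa(f)$ preserves $\sim$, any $Y'\in\kappa(f)(X')$ satisfies $Y\sim_a Y'$, i.e.\ $a\in\chi(Y\cap Y')$, and by definition of $\kappa(f)$ such a $Y'$ satisfies $f(X')\subseteq Y'$.

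The main obstacle I anticipate is precisely this $K_a$ case: one must be careful that $a$ is genuinely alive where needed, so that an $a$-indistinguishable witness actually exists on the $\D$ side. The subtlety is that $f(X')$ may be strictly smaller than $Y'$ (the codomain world may have more alive agents), so $\sigma(f)(X')$ need not itself be a facet---but it is contained in the facet $Y'$, which is all the argument requires. Managing this asymmetry between alive agents in domain and codomain, together with threading it correctly through the categorical equivalence rather than reasoning twice, is where the care lies; the rest is routine structural induction.
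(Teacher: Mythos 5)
Your proof is correct and follows essentially the same route as the paper's: the same case split on the guard $\aliveprop{B}$ in the base case using the label-agreement condition on morphisms, and the same construction of a witness facet $Y'\supseteq f(X')$ with $a\in\chi(Y\cap Y')$ in the $K_a$ case (the paper chases the $a$-coloured vertex $v\in X\cap X'$ directly where you invoke preservation of $\sim$ by $\kappa(f)$, but this is the same fact). Your explicit strengthening of the induction hypothesis to all pairs with $f(X)\subseteq Y$ is a point the paper leaves implicit, and is indeed needed for the $K_a$ step.
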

\begin{proof}
We proceed by induction on the structure of the guarded positive formula $\varphi$.

For the base case, assume $\phi = \aliveprop{B} \Rightarrow \psi$ for some set of agents $B \subseteq A$ and some propositional formula $\psi \in \PropL{B}$.
We distinguish two cases.
%\begin{itemize}
%\item
Either some agent $a \in B$ is dead in the world $X$, in which case $\C,X \models \phi$ is true.
%\item
Or all agents in $B$ are alive in $X$, and since $f(X) \subseteq Y$ (because $f$ is a morphism of pointed simplicial models), all agents in $B$ are also alive in $Y$.
Thus, we have $\D,Y \models \psi$.
Moreover, since $f$ is a morphism, we know that $\ell(X) \cap \AP_{\chi(X)} = \ell(Y) \cap \AP_{\chi(X)}$.
In particular, this yields $\ell(X) \cap \AP_{B} = \ell(Y) \cap \AP_{B}$ because $B \subseteq \chi(X)$.
So all atomic propositions in $\AP_{B}$ have the truth value in the worlds $X$ and $Y$.
As a consequence $\D,Y \models \psi$ implies that $\C,X \models \psi$, and thus $\C,X \models \phi$ as required.
%\end{itemize}

The cases of conjunction and disjunction follow trivially from the induction hypothesis.
Finally, for the case of a formula $K_a \phi$, suppose that $\D,Y \models K_a \varphi$. If $a \not\in \chi(X)$ then $\C,X \models K_a \varphi$, trivially (dead agents know everything).
So let us assume that $a \in \chi(X)$.
In order to show $\C,X \models K_a \varphi$, assume that $a \in \chi(X \cap X')$ for some facet $X'$,
and let us prove $\C,X' \models \varphi$.
Let $v$ be the $a$-coloured vertex in $X \cap X'$.
Then $f(v) \in f(X) \cap f(X')$.
Recall that $f(X) \subseteq Y$ by assumption, and let $Y'$ be a facet of $\D$ containing $f(X')$.
So $f(v) \in Y \cap Y'$, and since $\chi(f(v)) = a$, we get $a \in \chi(Y \cap Y')$ and thus $\D, Y' \models \varphi$.
By induction hypothesis, we obtain $\C,X' \models \varphi$.
\end{proof}

\section{Conclusion}

We began exposing the interplay between epistemic logics and combinatorial geometry in~\cite{gandalf-journal}.  The importance of this
perspective has been well established in distributed computing, where the topology of the simplicial model determines the solvability of a distributed task~\cite{herlihyetal:2013}.
Here we extended it to situations where agents may die: impure simplicial complexes need to be considered.
Many technical interesting issues arise, which shed light on the epistemic assumptions hiding behind the use of simplicial models.

But the main point is that our work opens the way to give a formal epistemic semantics to distributed systems where processes may fail and  failures are detectable (as in the synchronous crash failure model).
It would be interesting to use our simplicial model to  reason about the solvability of tasks in such systems, for example, the following have not been studied using epistemic logic, to the best of our knowledge:  non-complete communication (instead of broadcast situation we considered here) graphs~\cite{CastanedaFPRRT19}, and tasks such as renaming~\cite{okun2010} and lattice agreement~\cite{ZhengG20}.
Especially interesting would be  extending the set agreement logical obstruction of~\cite{yagiNishimura2020TR} to the synchronous crash setting.
%In distributed computing it is known that the number of rounds needed to solve $k$-set agreement  when $t$ processes may crash is $\lfloor t/k\rfloor+1$, generalizing the result for consensus (where $k=1$ and $t+1$ rounds are needed)~\cite{ChaudhuriHLT00,HRTsetAg01}.

Finally, we hope that our simplicial semantics can be useful to reason not only about distributed computing, but also about in other situations with interactions beyond pairs of agents~\cite{BATTISTON20201}.
For instance, impure simplicial complexes have been shown to occur when modelling social systems, neuroscience, and other biological systems (see e.g.~\cite{mock2021political}).

\bibliography{bibliography.bib}

\newpage

\appendix

\section{Proofs of the sample valid formulas in $\KBfour$, Section \ref{sec:axiomsystem}}
\label{app:appA}

\begin{proof}
We begin by proving that $\KBfour \vdash \deadprop{a} \Rightarrow K_a \phi$. 
By the $\mathbf{K}$ axiom, we have $K_a(\false \Rightarrow \phi)\Rightarrow (K_a \false \Rightarrow K_a \phi)$. But $\false \Rightarrow \phi$ is a tautology, and by the necessitation rule, $K_a(\false \Rightarrow \phi)$ is a tautology. Hence $K_a \false \Rightarrow K_a \phi$ but $\deadprop{a}\equiv K_a \false$. 

\medskip

We then prove that $\KBfour \vdash \aliveprop{a} \Rightarrow K_a\, \aliveprop{a}$.
% transitivity+symmetry so K plus B
By axiom $\mathbf{B}$ we know that $\true \Rightarrow K_a \neg K_a \false$, that is, $\true \Rightarrow K_a \aliveprop{a}$, hence $K_a \aliveprop{a}$. As a matter of fact, either $a$ is dead and it knows everything by the first property above, even $K_a \aliveprop{a}$ or $a$ is alive, and knows it is alive. 

\medskip

Now we prove that $\KBfour \vdash \aliveprop{a} \Rightarrow (K_a \phi \Rightarrow \phi)$.
We will show the contrapositive, $\KBfour \vdash (K_a\,\phi \land \neg \phi) \Rightarrow \deadprop{a}$.
Assume $K_a\,\phi$ and $\neg \phi$, we want to show $\deadprop{a}$, i.e.\ $K_a\,\false$.
By axiom~\textbf{B}, $\neg \phi \Rightarrow K_a \neg K_a\,\phi$; so by modus ponens, $K_a \neg K_a\,\phi$.
Moreover, by axiom~\textbf{4} and the assumption of $K_a\,\phi$, we get $K_a K_a\,\phi$.
Therefore, since we proved both $K_a \neg K_a\,\phi$ and $K_a K_a\,\phi$, by axiom~\textbf{K} and modus ponens, we obtain $K_a \false$.
  
\medskip

Finally we prove that $\KBfour \vdash K_a \phi \iff (\aliveprop{a} \Rightarrow K_a \phi)$. The left to right implication is trivial. Now suppose $\aliveprop{a} \Rightarrow K_a \phi$, we want to prove that $K_a \phi$. By modus ponens $\deadprop{a} \vee \aliveprop{a}$ and if $\deadprop{a}$ then $a$ knows everything by the first property we proved, for instance $K_a \phi$. If $\aliveprop{a}$ then, because $\aliveprop{a} \Rightarrow K_a \phi$, $K_a \phi$ holds. 
\end{proof}

\end{document}